\newtheorem{thm}{Theorem}
\newtheorem{lem}[thm]{Lemma}
\newtheorem{cor}[thm]{Corollary}
\newcommand{\Imi}{\mathrm{i}}
\newcommand{\Ref}[1]{(\ref{#1})}
\def\Bi#1#2{\L {{#1}\atop{#2}} \R}
\def\L{\left(}
\def\R{\right)}
\def\LV{\left|}
\def\RV{\right|}
\def\C#1{{\cal{#1}}}
\def\a#1{a_#1}
\def\q{\quad}
\def\sfrac#1#2{\hbox{\normalsize $\frac{#1}{#2}$}}
\def\Sfrac#1#2{\hbox{\large $\frac{#1}{#2}$}}
\definecolor{blue}{rgb}{0,0.18,0.39}
\definecolor{RoyalBlue}{rgb}{0,0.2,0.7}
\def\axes#1#2#3#4#5#6#7{
\setplotarea x from #7 to #5, y from #2 to #6
\setplotarea x from #1 to #5, y from #2 to #6
\axis left shiftedto x=#3 /
\axis bottom shiftedto y=#4 /
\put {\footnotesize$\bullet$} at #3 #4
}
\newcommand{\lmcn}{lima\c{c}on} 
\newcommand{\lmO}{\mathcal{L}_\mathrm{O}}
\newcommand{\lmI}{\mathcal{L}_\mathrm{I}}
\begin{document}
\title{Partition function zeros of adsorbing Dyck paths}
\author{
N.R. Beaton$^1$\footnote[1]{\texttt{nrbeaton@unimelb.edu.au}}
}

\address{$^1$School of Mathematics and Statistics, 
The University of Melbourne, VIC 3010, Australia\\}

\author{
E.J. Janse van Rensburg$^2$\footnote[2]{\texttt{rensburg@yorku.ca}}
}

\address{$^2$Department of Mathematics and Statistics, 
York University, Toronto, Ontario M3J~1P3, Canada\\}

\begin{abstract}
The zeros of the size-$n$ partition functions for a statistical mechanical model 
can be used to help understand the critical behaviour of the model as $n\to\infty$. 
Here we use weighted Dyck paths as a simple model of two-dimensional polymer 
adsorption, and study the behaviour of the partition function zeros, particularly 
in the thermodynamic limit. The exact solvability of the model allows for a precise 
calculation of the locus of the zeros and the way in which an edge-singularity 
on the positive real axis is formed.  We also show that in the limit
$n\to\infty$ the zeros converge on a lima\c{c}on in the complex plane.
\end{abstract}

\pacs{02.10.Ox, 05.50.+q, 64.60.Cn, 65.40.G-, 68.43.Mn}
\ams{05A15, 82B41, 82B23}
\maketitle

\section{Introduction}
\label{section1}   

Lee-Yang \cite{LY52,YL52} and Fisher zeros \cite{F64,F65} have been examined as 
a mathematical mechanism whereby critical points appear in the thermodynamic 
limit in models in statistical mechanics. The Lee-Yang theorem (see, for
example, theorem 5.1.2 in reference \cite{Ruelle83}) shows that 
Lee-Yang zeros accumulate on the unit circle in the complex plane in a broad
class of lattice models in statistical mechanics, including the lattice gas
and Ising model.  Studies of Lee-Yang zeros include the the Potts model
\cite{KC98a,KC01}, lattice $\phi^3$ theories \cite{KF79} 
and the $n$-vector model \cite{C85}.

In a finite size model with complex temperature or interaction strength,
zeros in the partition function (the \emph{Fisher zeros}) are (complex) non-analyticities in the 
free energy.  In the thermodynamic limit these non-analyticities 
accumulate on a critical point on the positive real axis to form an 
\textit{edge-singularity}.  This mechanism is thought to be a mathematical 
mechanism whereby phase transitions arise in the thermodynamic
limit in models of critical phenomena.

Partition function zeros in the complex temperature plane have a
different distribution and do not generally accumulate on the unit circle.
The distribution is generally dependent on the model, and is useful in
modeling phase transitions \cite{F65}.  In an Ising model with
special boundary conditions the temperature plane zeros accumulate on
two circles corresponding to the ferromagnetic and anti-ferromagnetic
phases, a  pattern which is not seen in the $q$-Potts model when
$q>2$ \cite{KC98a,KC01}.

In reference \cite{JvR17} the properties of partition and generating function 
zeros in a self-avoiding walk model of polymer adsorption were examined.
While some generating function zeros were shown to accumulate on a 
circle in the complex plane, the partition function zeros appear to approach
a limiting distribution which forms an edge-singularity at the critical
adsorption point.  In this paper our aim is to examine partition function
zeros of adsorbing Dyck paths as a directed version of the adsorbing
self-avoiding walk model in reference \cite{JvR17}.

Adsorbing walks are models of the adsorption transition of a dilute polymer 
in a good solvent.  The adsorption transition in a polymer is characterised by a 
conformational rearrangement of monomers, and by singular behaviour 
of thermodynamic quantities at the adsorption critical point 
(see for example references \cite{DL93,deG79,HLK04,HMK04}).  The adsorbing 
self-avoiding walk is a standard model for linear polymer adsorption 
\cite{HTW82,FPF89,VW98A} and has been studied numerically in, for example, 
references \cite{JvRR04,JvR16}.  Exactly solvable models of directed 
lattice paths were introduced in references \cite{PFF88,W98} as models of
linear polymer adsorption; see references \cite{BEO98,JvR10A} as well.   
These directed models may be exactly solvable, and considerably more information can 
be obtained by analysing them \cite{JvR03}.  This may give some information 
on the adsorption critical point as the thermodynamic limit is taken 
in the more general case.

\begin{figure}[t]
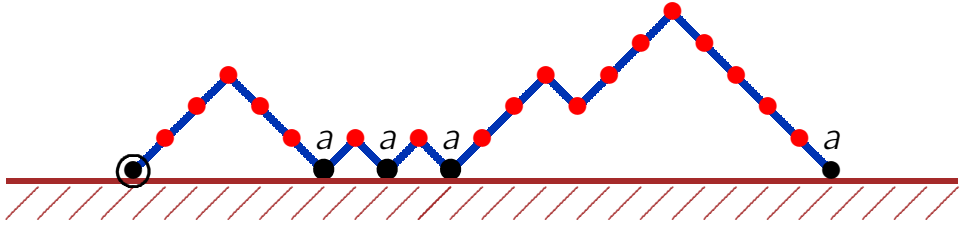

\beginpicture
\setcoordinatesystem units <1.2pt,1.2pt>
\setplotarea x from -40 to 150, y from -50 to 50
\setlinear

\color{Brown}
\multiput
{\plot 0 0 10 10 /}
at -20 -60 -10 -60 0 -60  10 -60 20 -60 30 -60 
   40 -60 50 -60    60 -60 70 -60 80 -60 90 -60 
   100 -60 110 -60 120 -60 130 -60 140 -60  
   150 -60 160 -60 170 -60 180 -60 190 -60
   200 -60 210 -60 220 -60 230 -60 240 -60 
   250 -60 110 -60 260 -60 270 -60 /

\setplotsymbol ({\huge.})
\color{RoyalBlue}
\plot 20 -50 30 -40 40 -30 50 -20 60 -30 70 -40 80 -50 
90 -40 100 -50 110 -40 120 -50 130 -40 140 -30 150 -20
160 -30 170 -20 180 -10 190 0 200 -10 210 -20 220 -30 
230 -40 240 -50 /

\color{red}
\multiput {\Large $\bullet$} at 
30 -40 40 -30 50 -20 60 -30 70 -40 80 -50 
90 -40 100 -50 110 -40 120 -50 130 -40 140 -30 150 -20
160 -30 170 -20 180 -10 190 0 200 -10 210 -20 220 -30 
230 -40 /
\color{black}
\multiput {\LARGE $\bullet$} at 80 -50 100 -50 120 -50 /
\multiput {\large $a$} at  80 -40 100 -40 120 -40 240 -40 /

\color{Brown}
\linethickness=2pt
\putrule from -20 -53 to 280 -53

\color{black}
\setplotsymbol ({.})
\circulararc 360 degrees from 20 -55 center at 20 -50
\multiput {\Large $\bullet$} at 20 -50 / 
\multiput {\Large $\bullet$} at 240 -50 /
\normalcolor

\endpicture
\caption{An adsorbing Dyck path of length $22$.  The path gives North-East and South-East steps 
in the positive square lattice (the $x$-axis is a hard wall) and is weighted by the 
number of returns (visits) to the wall (or adsorbing line), in this case $a^4$. The path is
conditioned to end in the adsorbing line.}
\label{figure1}
\end{figure}

A \emph{Dyck path} of length $2n$ is a walk on the square lattice 
$\mathbb Z^2$, starting at $(0,0)$ and ending at $(2n,0)$, taking steps 
$(1,1)$ and $(1,-1)$, and always remaining on or above the line $y=0$ 
(see Figure \ref{figure1}).  Let $d_{2n}(v)$ be the number of 
Dyck paths of length $2n$ which contain $v+1$ vertices in the line $y=0$ (these are 
called \emph{visits}). We associate a weight $a$ with each visit (excluding the first vertex) 
to obtain a partition function $D_{2n}(a)$, given by
\begin{equation}
D_{2n} (a) = \sum_v d_{2n}(v)a^v = \sum_{\ell=0}^n \frac{2\ell+1}{n+\ell+1}\, \Bi{2n}{n+\ell}\, (a-1)^\ell \, ,
\label{eqn:exact_pf}
\end{equation}
see for example equation (5.32) in reference \cite{JvR15}.  

The partition function zeros $\a{\ell}$ of adsorbing Dyck paths of length $2n$ 
are the complex solutions of $D_{2n}(a)=0$ in the $a$-plane.  Since
$D_{2n}(a)$ is a polynomial of degree $n$ in $a$ with non-negative coefficients, 
and the coefficient of $a^n$ is equal to $1$, the zeros $\a{\ell}$ occur
as negative real numbers or as conjugate pairs, and $D_{2n}(a)$ factors as
\begin{equation}
D_{2n}(a) = \prod_{\ell=1}^n(a-\a{\ell}) .
\end{equation}
Since every non-empty Dyck path ends with a visit, $a=0$ is a root of 
$D_{2n}(a)$ for all $n\geq 1$. We will henceforth refer to $a=0$ as 
the \emph{trivial zero}.

In Section~\ref{section2} we examine the generating function of adsorbing Dyck paths in order to determine the asymptotic behaviour of $D_{2n}(a)$ for real and complex $a$, and find a ``phase diagram'' of sorts in the complex $a$-plane. In Section~\ref{section3} we turn to the zeros of $D_{2n}(a)$, and show that they collect and are dense on the curve that separates the two phases in the $a$-plane. In Section~\ref{section4} we use numerical techniques to estimate the locations of the zeros of $D_{2n}(a)$ for finite $n$. Finally in Section~\ref{sec:leadingzero} we compute the exact asymptotics of the ``leading'' zero (the one with smallest positive argument), which approaches the critical point $a_c =2$ as $n\to\infty$, this also being the location of a edge-singularity of the model.

\section{The generating function of adsorbing Dyck paths and the complex $a$-plane}
\label{section2}

For $a$ real and positive, the limiting free energy of adsorbing Dyck paths is given by
\begin{equation}
\C{D}(a) = \lim_{n\to\infty} \sfrac{1}{2n} \log D_{2n}(a) = 
\cases{
\log 2 & \textsf{if $a\leq 2$}; \\
\log a - \sfrac{1}{2} \log (a-1) & \textsf{if $a\geq 2$}
}
\label{eqn3}  
\end{equation}
$\C{D}(a)$ is singular at the point $a=2$, corresponding to the adsorption phase transition.  It is not immediately clear on how to generalise $\C{D}(a)$ as a
function of complex $a$.

The generating function of adsorbing Dyck paths is given by
\begin{equation}
D(t,a) = \sum_{n=0}^\infty D_{2n}(a)\, t^n 
= \frac{2}{2-a(1-\sqrt{1-4t})} .
\label{eqn5}  
\end{equation}
It follows that $D(t,a)$ has one or two singularities in the $t$-plane, 
depending on the value of $a$: a square root singularity at $t=\frac14$ for all $a\neq0$ or 2,  and a simple pole at $t=\frac{a-1}{a^2}$ when 
$|a-1|\geq1$ and $a\neq 0$ or 2. When $a=2$ these singularities 
coalesce and form a pole of order $\frac12$. The \emph{dominant} 
singularity (that is, the one closest to the origin) is then
\begin{equation}\
t_c(a) = \cases{
\sfrac14, & \textsf{if $|\frac{a-1}{a^2}| \geq \frac14$ or $|a-1| < 1$}; \\
\sfrac{a-1}{a^2}, & \textsf{if $|\frac{a-1}{a^2}| < \frac14$ and $|a-1| \geq 1$.}
}
\label{tc.Cplane} 
\end{equation}

Note that when $a$ is real and positive, $\mathcal D(a) = -\log t_c(a)$,
 as given by~\Ref{eqn3} and~\Ref{tc.Cplane}. This then answers the 
question as to the generalisation of $\mathcal D(a)$ to complex $a$: 
we should use $-\log t_c(a)$, which is exactly the analytic continuation 
of the two branches of~\Ref{eqn3} which gives a continuous function in $\mathbb C$.

The curve in $\mathbb{C}$ which delineates the two regions given 
in~\Ref{tc.Cplane} is a branch of the curve
\begin{equation}
\LV \frac{a-1}{a^2} \RV = \frac{1}{4}.
\label{eqn4}   
\end{equation}
The curve defined by~\Ref{eqn4} is a \emph{lima\c{c}on}; it can be 
parametrised by $a=x+y\Imi$, where
\begin{equation}
x = 2 + \left(2\sqrt{2}-4\cos\phi\right)\cos\phi \quad\textsf{and}\quad y = \left(2\sqrt{2}-4\cos\phi\right)\sin\phi
\label{eqn5A}   
\end{equation}
for $\phi\in[0,2\pi)$.
The lima\c{c}on has two ``lobes'': taking only $\phi\in[\frac{\pi}{4},\frac{7\pi}{4})$ 
gives the outer lobe, while $\phi\in[0,\frac{\pi}{4})\cup(\frac{7\pi}{4},2\pi)$ gives the inner lobe (see Figure \ref{figure2}).  The inner lobe lies entirely within the region $|a-1|\leq 1$, where the simple pole does not exist, so that the dominant singularity is still at $t=\frac14$ there.  As a result, it is precisely the outer lobe of the lima\c{c}on which separates the two regions defined by~\Ref{tc.Cplane}; as $a$ crosses from one side of this curve to the other, $t_c(a)$ switches from one value to the other, and hence so too does $\mathcal D(a)$. We will use $\lmO$ to denote the outer lobe of the \lmcn{} and $\lmI$ to denote the inner lobe.

\begin{figure}[t]
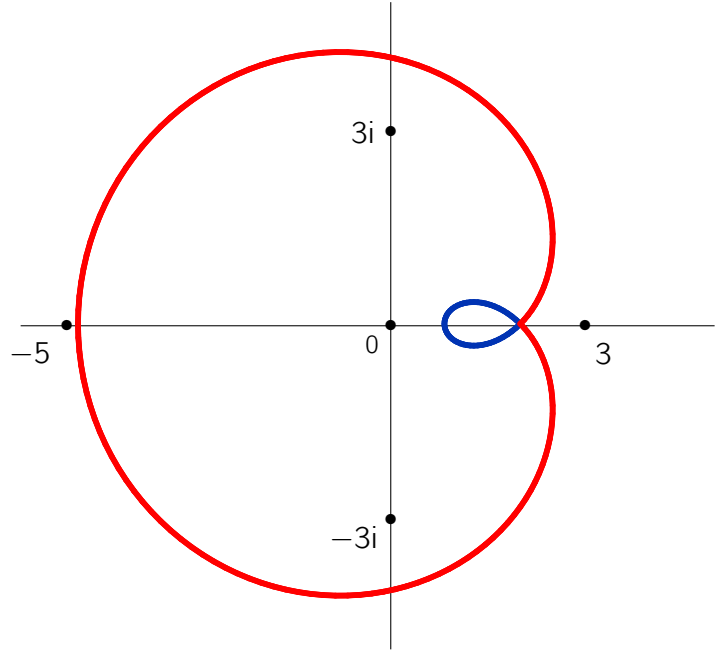

\begin{center}
\beginpicture
\color{black}
\setcoordinatesystem units <3.5pt,3.5pt>
\axes{-40}{-35}{0}{0}{35}{35}{-70}
\put {\footnotesize$0$} at -2 -2
\put {\footnotesize$\bullet$} at -35 0
\put {$-5$} at -39 -3
\put {\footnotesize$\bullet$} at 21 0
\put {$3$} at 23 -3
\put {\footnotesize$\bullet$} at 0 21
\put {$3\Imi$} at -3 21
\put {\footnotesize$\bullet$} at 0 -21
\put {$-3\Imi$} at -4 -23

\color{RoyalBlue}
\setplotsymbol ({\LARGE$\cdot$})
\plot 14.000 0.000  13.554 0.419  13.097 0.796  12.632 1.132  12.163 1.425  11.691 1.677  11.222 1.888  10.756 2.059  10.297 2.190  9.848 2.282  9.412 2.338  8.991 2.357  8.587 2.342  8.203 2.295  7.841 2.217  7.504 2.111  7.192 1.978  6.909 1.821  6.655 1.642  6.431 1.444  6.240 1.229  6.083 1.000  5.959 0.760  5.870 0.511  5.817 0.257  5.799 0.000  5.817 -0.257  5.870 -0.511  5.959 -0.760  6.083 -1.000  6.240 -1.229  6.431 -1.444  6.655 -1.642  6.909 -1.821  7.192 -1.978  7.504 -2.111  7.841 -2.217  8.203 -2.295  8.587 -2.342  8.991 -2.357  9.412 -2.338  9.848 -2.282  10.297 -2.190  10.756 -2.059  11.222 -1.888  11.691 -1.677  12.163 -1.425  12.632 -1.132  13.097 -0.796  13.554 -0.419  14.000 0.000 /

\color{red}
\plot 14.000 0.000  14.432 0.460  14.848 0.962  15.244 1.503  15.617 2.084  15.964 2.703  16.282 3.358  16.570 4.049  16.823 4.774  17.040 5.529  17.218 6.315  17.354 7.127  17.447 7.965  17.494 8.825  17.494 9.705  17.444 10.601  17.344 11.511  17.192 12.432  16.987 13.361  16.727 14.295  16.412 15.229  16.042 16.161  15.615 17.088  15.133 18.005  14.594 18.910  14.000 19.799  13.350 20.668  12.646 21.515  11.889 22.334  11.079 23.125  10.218 23.881  9.307 24.602  8.349 25.283  7.344 25.922  6.297 26.514  5.208 27.059  4.081 27.552  2.917 27.992  1.721 28.376  0.494 28.702  -0.760 28.967  -2.037 29.171  -3.334 29.310  -4.648 29.384  -5.975 29.392  -7.311 29.333  -8.653 29.204  -9.997 29.007  -11.339 28.741  -12.674 28.405  -14.000 28.000  -15.312 27.526  -16.606 26.983  -17.879 26.372  -19.126 25.695  -20.344 24.952  -21.529 24.146  -22.678 23.276  -23.786 22.347  -24.852 21.359  -25.870 20.315  -26.839 19.217  -27.754 18.069  -28.614 16.872  -29.416 15.631  -30.156 14.347  -30.833 13.025  -31.445 11.668  -31.990 10.280  -32.465 8.864  -32.870 7.423  -33.203 5.963  -33.463 4.487  -33.650 2.998  -33.762 1.501  -33.799 -0.000  -33.762 -1.501  -33.650 -2.998  -33.463 -4.487  -33.203 -5.963  -32.870 -7.423  -32.465 -8.864  -31.990 -10.280  -31.445 -11.668  -30.833 -13.025  -30.156 -14.347  -29.416 -15.631  -28.614 -16.872  -27.754 -18.069  -26.839 -19.217  -25.870 -20.315  -24.852 -21.359  -23.786 -22.347  -22.678 -23.276  -21.529 -24.146  -20.344 -24.952  -19.126 -25.695  -17.879 -26.372  -16.606 -26.983  -15.312 -27.526  -14.000 -28.000  -12.674 -28.405  -11.339 -28.741  -9.997 -29.007  -8.653 -29.204  -7.311 -29.333  -5.975 -29.392  -4.648 -29.384  -3.334 -29.310  -2.037 -29.171  -0.760 -28.967  0.494 -28.702  1.721 -28.376  2.917 -27.992  4.081 -27.552  5.208 -27.059  6.297 -26.514  7.344 -25.922  8.349 -25.283  9.307 -24.602  10.218 -23.881  11.079 -23.125  11.889 -22.334  12.646 -21.515  13.350 -20.668  14.000 -19.799  14.594 -18.910  15.133 -18.005  15.615 -17.088  16.042 -16.161  16.412 -15.229  16.727 -14.295  16.987 -13.361  17.192 -12.432  17.344 -11.511  17.444 -10.601  17.494 -9.705  17.494 -8.825  17.447 -7.965  17.354 -7.127  17.218 -6.315  17.040 -5.529  16.823 -4.774  16.570 -4.049  16.282 -3.358  15.964 -2.703  15.617 -2.084  15.244 -1.503  14.848 -0.962  14.432 -0.460  14.000 0.000   /

\color{black}
\normalcolor

\endpicture
\end{center}
\caption{The lima\c{c}on defined by equation \Ref{eqn4} has two lobes.  The inner lobe
lies inside the region $|a-1|\leq 1$ and plays no role in the zeros of $D_{2n}(a)$.  
The outer lobe separates the two regions defined by equation \Ref{tc.Cplane}: the dominant singularity is at $t=\frac14$ when $a$ is inside, and $t=\frac{a-1}{a^2}$ when $a$ is outside.}
\label{figure2}
\end{figure}

\section{Complex zeros are dense on the \lmcn{} in the limit $n\to\infty$}
\label{section3}

In this section we will see that the curve $\lmO$ not only divides the complex $a$-plane into two regions according to the asymptotic behaviour of $t_c(a)$ and $\mathcal{D}(a)$; it is also the case that all non-trivial roots of $D_{2n}(a)$ approach $\lmO$ as $n\to\infty$, and that the roots become dense on $\lmO$ in the limit. In this section we will assume that $a\neq0$ or 2.

The two singularities in $D(t,a)$ (in equation \Ref{eqn5})
contribute to the exponential
growth of $D_{2n}(a)$.  Expanding $D(t,a)$ around the point $t=\frac14$ gives
\begin{equation}
D(t,a) = \frac{2}{2-a} \sum_{j=0}^\infty \L \frac{a}{a-2} \R^j\, (1-4t)^{j/2}.
\label{eqn7}  
\end{equation}
This series becomes singular if $a\to2$. To avoid this,  bound $a$ away from 2
by fixing $\delta>0$ and defining
\begin{equation}
S_\delta = \{ z\in {\mathbb C} \, \hbox{\Large$|$}\, \hbox{$|z-2|\geq\delta$} \}.
\end{equation}
In what follows, assume that $a\in S_\delta$. 
Later in Section~\ref{sec:leadingzero} we will address the case $a\to2$.

By the Cauchy integral theorem, the terms in equation \Ref{eqn7}
contribute to the exponential growth of $D_{2n}(a)$.  Computing
the coefficients of $t^n$ in equation \Ref{eqn7} shows that $D_{2n}(a)$ 
can be cast in the form
\begin{equation}
r_{n}(a) = \frac{a}{(a-2)^2} \cdot \frac{4^n}{\sqrt{\pi n^3}} \sum_{\ell=0}^\infty \frac{g_\ell(a)}{n^{\ell}} .
\label{sqrt.asymp}  
\end{equation}
This is done by expanding the factor 
\begin{equation}
(1-4t)^{j/2} = \sum_{\ell=0}^\infty \L -\Sfrac{j}{2} \R_\ell \frac{(4t)^\ell}{\ell !}  
\end{equation}
where $(a)_n$ is the Pochhammer function, and then determining
the functions $g_\ell (a)$ term-by-term.  Notice that
$D_{2n}(a) = r_n(a)$ if $|a-1|<1$.  The functions 
$g_\ell(a)$ are rational functions of $a$ of the form
\begin{equation}\label{eqn:g_ell_form}
g_\ell(a) = \frac{q_\ell(a)}{(a-2)^\ell}
\end{equation}
and $q_\ell(a)$ are computable polynomials of degree at most $\ell -1$ in $a$. 
The sum in equation \Ref{sqrt.asymp} 
can be rewritten to give
\begin{equation}
r_{n}(a) = \frac{a}{(a-2)^2} \cdot \frac{4^n}{\sqrt{\pi n^3}} \left(\sum_{\ell=0}^L \frac{g_\ell(a)}{n^{\ell}} + \frac{\beta_{L+1,n}(a)}{n^{L+1}}\right)
\label{sqrt.asymp.trunc}
\end{equation}
for any $L\geq -1$ (if $L=-1$ then the sum is empty), where 
(as long as $a\in S_\delta$) $\beta_{L+1,n}(a) \to g_{L+1}(a)$ as $n\to\infty$. 
That is, the sum in equation \Ref{sqrt.asymp} can be truncated at any 
$\ell\geq0$ and what remains will be a valid asymptotic expansion 
for the contribution of equation \Ref{eqn7}.

Expanding $D(t,a)$ in equation \Ref{eqn5} near the simple pole gives
\begin{equation}
D(t,a) = \frac{a-2}{a-1} \, (1-At)^{-1}
  + \sum_{k=0}^\infty \frac{(-1)^k (a-1)^k}{(a-2)^{2k+1}} \, C_k \, (1-At)^k
\label{eqn8}  
\end{equation}
where $C_k$ are Catalan numbers and $A=\Sfrac{a^2}{a-1}$.
Determining the coefficients $p_n(a)$ of $t^n$ gives, when $|a-1| \geq 1$, 
\begin{equation}
p_n(a) = \frac{a-2}{a-1} \L \frac{a^2}{a-1} \R^{\!\!n}.
\label{pole.asymp}  
\end{equation}
Notice that there are no additional terms; the contribution of the series in equation
\Ref{eqn8} is already given by equation \Ref{sqrt.asymp.trunc}.
See also, for example, equation (5.17) in reference \cite{JvR15}.

\begin{lem}\label{lem:no_roots_near_1}
Let $\epsilon>0$ and define 
$B_\epsilon = \{ z\in {\mathbb C} \, 
\hbox{\Large$|$}\, \hbox{$|z-1|\leq 1-\epsilon$} \}$. 
Then for all $n$ sufficiently large, there are no zeros of $D_{2n}(a)$ inside $B_\epsilon$.
\end{lem}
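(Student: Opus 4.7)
The plan is to apply the asymptotic expansion \Ref{sqrt.asymp.trunc} of $D_{2n}(a)$ directly on $B_\epsilon$ and show that its leading term is bounded away from zero. First I would record the relevant geometry: by the triangle inequality, for $a \in B_\epsilon$ one has $|a| \geq 1 - |a-1| \geq \epsilon$ and $|a-2| \geq 1 - |a-1| \geq \epsilon$, so $B_\epsilon$ is compact and bounded away from both $0$ and $2$, and in particular $B_\epsilon \subset S_\epsilon$. Moreover $|a-1|<1$ strictly, so by \Ref{tc.Cplane} the dominant singularity of $D(t,a)$ is at $t=\sfrac{1}{4}$, the simple-pole term \Ref{pole.asymp} is absent, and the identity $D_{2n}(a)=r_n(a)$ holds throughout $B_\epsilon$.

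Next I would identify the leading coefficient $g_0(a)$ in \Ref{sqrt.asymp}. Only odd $j$ in \Ref{eqn7} contribute singular behaviour at $t=\sfrac{1}{4}$, and the slowest-decaying contribution comes from $j=1$. Using the standard singular asymptotic $[t^n](1-4t)^{1/2} \sim -\,4^n/(2\sqrt{\pi}\,n^{3/2})$, the $j=1$ term of \Ref{eqn7} contributes
$$
\frac{2}{2-a}\cdot \frac{a}{a-2}\cdot [t^n](1-4t)^{1/2} \;\sim\; \frac{a}{(a-2)^2}\cdot \frac{4^n}{\sqrt{\pi n^3}},
$$
which matches the prefactor pulled out of \Ref{sqrt.asymp} exactly; the remaining odd-$j$ terms ($j\geq 3$) produce coefficients of order $4^n/n^{5/2}$ or smaller. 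Hence $g_0(a)\equiv 1$, and specialising \Ref{sqrt.asymp.trunc} to $L=0$ gives
$$
D_{2n}(a) \;=\; \frac{a}{(a-2)^2}\cdot \frac{4^n}{\sqrt{\pi n^3}}\left(1 + \frac{\beta_{1,n}(a)}{n}\right), \qquad a\in B_\epsilon.
$$
Since the prefactor is nonzero on $B_\epsilon$, it suffices to show the bracket is nonzero for all $n$ large enough.

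To close the argument, I would promote the pointwise convergence $\beta_{1,n}(a)\to g_1(a)$ to uniform convergence on the compact set $B_\epsilon\subset S_\epsilon$; this follows from standard singularity-analysis estimates of the remainder integral on a fixed Hankel contour around $t=\sfrac{1}{4}$ whose geometry depends only on $\epsilon$. Since $g_1(a)=q_1(a)/(a-2)$ is bounded on $B_\epsilon$, so too is $\beta_{1,n}(a)$ uniformly in $n$, giving some $n_0$ for which $|\beta_{1,n}(a)/n|<\sfrac{1}{2}$ on $B_\epsilon$ whenever $n\geq n_0$, and the bracket is then nonzero. The main obstacle is precisely this uniformity statement; a cleaner alternative is to invoke Hurwitz's theorem for $f_n(a):=\sqrt{\pi n^3}\,D_{2n}(a)/4^n$, which by the expansion above converges locally uniformly on the open disk $\{|a-1|<1\}$ to the analytic function $a/(a-2)^2$. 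This limit has no zeros in $B_\epsilon$, so $f_n$ (and hence $D_{2n}$) has no zeros in $B_\epsilon$ for all sufficiently large $n$.
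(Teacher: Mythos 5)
Your proof is correct and follows essentially the same route as the paper's: restrict to $B_\epsilon$, where only the square-root singularity at $t=\sfrac{1}{4}$ is present, so $D_{2n}(a)=r_n(a)\sim \frac{a}{(a-2)^2}\cdot\frac{4^n}{\sqrt{\pi n^3}}$, whose prefactor is nonzero throughout $B_\epsilon$. The only substantive difference is that you make explicit the uniformity of the approximation over the compact set $B_\epsilon$ (via uniform remainder bounds or the Hurwitz/Vitali argument), a point the paper's proof passes over silently but which is genuinely needed to obtain a single $n_0$ valid for all $a\in B_\epsilon$ simultaneously.
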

\begin{proof}
If $a\in B_\epsilon$ then the only singularity of $D(t,a)$ is the square 
root singularity at $t=\frac14$. Thus $D_{2n}(a) = r_n(a)$ as 
per~(\ref{sqrt.asymp}) or (\ref{sqrt.asymp.trunc}). Since $g_0(a) = 1$, 
truncate the sum to obtain the asymptotic form
\begin{equation}
r_n(a) \sim r^0_{n}(a) = \frac{a}{(a-2)^2}\cdot\frac{4^n}{\sqrt{\pi n^3}}.
\end{equation}
For $a\in B_\epsilon$, $D_{2n}(a)$ can be made arbitrarily 
close to $r^0_{n}(a)$ by taking $n$ sufficiently large 
(since $r_n^0(a)/r_n(a) \to 1$ as $n\to\infty$). 
But clearly $r_n(a) \neq 0$ for $a\in B_\epsilon$, so $D_{2n}(a)$ 
cannot have a zero there.
\end{proof}

Outside of $B_\epsilon$ things are more interesting.

\begin{lem}\label{lem:roots_to_lmcn}
For each $n\geq1$, let $\alpha_n$ be any sequence of non-trivial zeros
of $D_{2n}(a)$ in $S_\delta$. Then
\begin{equation}
\LV \frac{\alpha_n^2}{\alpha_n-1} \RV \to 4 \quad \mathit{as}\quad n\to\infty.
\label{alpha.to.4}
\end{equation}
\end{lem}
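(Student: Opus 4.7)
The plan is to combine the exact decomposition $D_{2n}(a)=r_n(a)+p_n(a)$, valid for $|a-1|\ge 1$, with the very different asymptotic growth rates of its two pieces. From \Ref{sqrt.asymp.trunc} with $L=0$ and \Ref{pole.asymp},
\begin{equation*}
|r_n(a)|\sim \left|\frac{a}{(a-2)^2}\right|\frac{4^n}{\sqrt{\pi n^3}},\qquad |p_n(a)|=\left|\frac{a-2}{a-1}\right|\left|\frac{a^2}{a-1}\right|^{n},
\end{equation*}
so that $|r_n(a)|^{1/n}\to 4$ while $|p_n(a)|^{1/n}\to |a^2/(a-1)|$ for $a$ in a bounded set bounded away from $1$ and $2$. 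Since $D_{2n}(\alpha_n)=0$ forces $|r_n(\alpha_n)|=|p_n(\alpha_n)|$, the $n$-th root of this equality should yield $|\alpha_n^2/(\alpha_n-1)|\to 4$, provided the prefactors and the $\sqrt{n^3}$ correction disappear under the $1/n$-power.

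The first step is to invoke Lemma \ref{lem:no_roots_near_1}: for any fixed $\epsilon>0$ and all $n$ large enough, $|\alpha_n-1|>1-\epsilon$, so the pole contribution is relevant along the sequence and the decomposition applies. Combined with $\alpha_n\in S_\delta$, this keeps the prefactors $|(\alpha_n-2)/(\alpha_n-1)|$ and $|\alpha_n/(\alpha_n-2)^2|$ uniformly bounded away from $0$, though not yet away from $\infty$. To rule out $|\alpha_n|\to\infty$ along a subsequence, take logarithms in $|r_n(\alpha_n)|=|p_n(\alpha_n)|$ and divide by $n$; for $|\alpha_n|$ large the dominant contributions on each side collapse to
\begin{equation*}
\log 4 + o(1) = \left(1+\frac{1}{n}\right)\log|\alpha_n|+o(1),
\end{equation*}
which forces $|\alpha_n|\to 4$, a contradiction. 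Hence $(\alpha_n)$ is bounded.

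With $(\alpha_n)$ confined to a compact subset of $\mathbb{C}$ bounded away from $1$ and $2$, every $(1/n)\log(\cdot)$ term involving a prefactor, together with $\frac{1}{2n}\log(\pi n^3)$, vanishes as $n\to\infty$. The same logarithmic manipulation then simplifies to
\begin{equation*}
\log\left|\frac{\alpha_n^2}{\alpha_n-1}\right|\to\log 4,
\end{equation*}
which is \Ref{alpha.to.4}. The main technical requirement is that the $O(1/n)$ remainder implicit in \Ref{sqrt.asymp.trunc} be uniform in $a$ over the relevant compact set avoiding $2$; the structural form $g_\ell(a)=q_\ell(a)/(a-2)^\ell$ with $\deg q_\ell\le\ell-1$ from \Ref{eqn:g_ell_form} is exactly what delivers this uniformity, so no further input is needed.
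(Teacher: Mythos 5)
Your argument is the same one the paper uses: decompose $D_{2n}=r_n+p_n$ on $|a-1|\ge1$, note that a zero forces $|r_n(\alpha_n)|=|p_n(\alpha_n)|$, take $\frac1n\log$ of both sides, and argue that all prefactors and the $\sqrt{\pi n^3}$ correction wash out so that only the balance of exponential rates $4$ versus $|\alpha_n^2/(\alpha_n-1)|$ survives. Your extra step ruling out $|\alpha_n|\to\infty$ is a genuine improvement in explicitness over the paper, which leaves that implicit.

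There is, however, one concrete misstep: you assert that Lemma~\ref{lem:no_roots_near_1} together with $\alpha_n\in S_\delta$ keeps the prefactor $|\alpha_n/(\alpha_n-2)^2|$ uniformly bounded away from $0$. It does not. The point $a=0$ satisfies $|a-1|=1$, so it lies outside every $B_\epsilon$ and inside $S_\delta$; nothing you have invoked excludes a sequence of zeros with $\alpha_n\to0$. If $|\alpha_n|$ were to decay exponentially in $n$, then $\frac1n\log|\alpha_n/(\alpha_n-2)^2|$ would not vanish and your balance equation would degenerate, so the step ``every $(1/n)\log(\cdot)$ prefactor term vanishes'' fails precisely there. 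This is exactly the degenerate case the paper's proof flags in its closing remark: one must check separately that for $\alpha_n\to0$ the pole contribution $|p_n(\alpha_n)|\sim|\alpha_n|^{2n}$ decays far faster than $|r_n(\alpha_n)|\sim|\alpha_n|\,4^n/\sqrt{\pi n^3}$, so the two terms can never cancel and no zeros accumulate at $a=0$. Adding that one observation (or, equivalently, noting that the conclusion $|\alpha_n^2/(\alpha_n-1)|\to4$ is itself incompatible with $\alpha_n\to0$ only after the subexponential case is handled, so the exponential case still needs the direct comparison) closes the gap; the rest of your proof stands.
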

\begin{proof}
If $|a-1|<1$ then $D_{2n}(a) = r_n(a)$, and by 
Lemma~\ref{lem:no_roots_near_1} 
there are no possible zeros for large $n$.   Thus, assume that 
$|a-1|\geq1$ so that $D_{2n}(a) = r_n(a) + p_n(a)$.  If $\alpha_n$ is a zero of
$D_{2n}(a)$, then we must have $|r_n(\alpha_n)| = |p_n(\alpha_n)|$.
In this case $|r_n(\alpha_n)|$ grows at the exponential rate
$4^n$ and $|p_n(\alpha_n)|$ grows at the exponential rate
$\LV \frac{\alpha_n^2}{\alpha_n-1} \RV^n$.  If $n$ is large
then for $\alpha_n$ to be a zero it must (approximately) 
balance the two exponential growth rates for finite values of $n$.   
As $n\to\infty$ then the two growth rates must become equal. 
This can only happen if $\alpha_n$ satisfies equation \Ref{alpha.to.4}.

Note that at first glance it may appear that $\alpha_n\to0$ is also 
a valid possibility, but a more careful analysis shows that in that 
case $p_n(\alpha_n)$ would approach $0$ exponentially faster 
than $r_n(\alpha_n)$. 
\end{proof}

Lemmas~\ref{lem:no_roots_near_1} and~\ref{lem:roots_to_lmcn} immediately lead to the following.
\begin{cor}\label{cor:outer_lobe_roots}
As $n\to\infty$, all non-trivial zeros of $D_{2n}(a)$ approach the curve $\lmO$.
\end{cor}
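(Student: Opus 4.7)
The plan is a compactness argument combining Lemmas~\ref{lem:no_roots_near_1} and~\ref{lem:roots_to_lmcn}. Suppose for contradiction that the conclusion fails; then there exist $\epsilon_0>0$, an increasing sequence $n_k\to\infty$, and non-trivial zeros $\alpha_{n_k}$ of $D_{2n_k}(a)$ with $d(\alpha_{n_k},\lmO)\geq\epsilon_0$. I intend to show that any limit point $\alpha^*$ of this sequence must lie on $\lmO$, contradicting the standing assumption.

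First I would establish uniform boundedness of non-trivial zeros in order to apply Bolzano--Weierstrass. Fix some $\delta>0$. Any zero $\alpha$ in $S_\delta$ satisfies $|\alpha^2/(\alpha-1)|\to 4$ by Lemma~\ref{lem:roots_to_lmcn}, and since $z\mapsto |z^2/(z-1)|$ diverges as $|z|\to\infty$, such zeros are uniformly bounded for large $n$; any zero outside $S_\delta$ lies in $|a-2|<\delta$ and is trivially bounded. Passing to a convergent subsequence, $\alpha_{n_k}\to\alpha^*\in\mathbb C$ with $d(\alpha^*,\lmO)\geq\epsilon_0$.

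Next I would identify $\alpha^*$. If $\alpha^*=2$ then $\alpha^*\in\lmO$ (the parametrisation~\Ref{eqn5A} gives $a=2$ at $\phi=\pi/4$), which already contradicts $d(\alpha^*,\lmO)\geq\epsilon_0$. Otherwise $\alpha_{n_k}\in S_\delta$ for some $\delta>0$ and all $k$ large, so Lemma~\ref{lem:roots_to_lmcn} combined with continuity of $z\mapsto|z^2/(z-1)|$ yields $|\alpha^{*2}/(\alpha^*-1)|=4$; this places $\alpha^*$ on the full \lmcn{} $\lmO\cup\lmI$. Since $\alpha^*\notin\lmO$, we must have $\alpha^*\in\lmI\setminus\{2\}$, which sits strictly inside $\{|a-1|<1\}$ (the equation $|a-1|=|a|^2/4$ on the inner lobe gives $|a|<2$ away from the junction). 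Choosing $\epsilon>0$ with $|\alpha^*-1|<1-2\epsilon$, for all $k$ sufficiently large we then have $\alpha_{n_k}\in B_\epsilon$, contradicting Lemma~\ref{lem:no_roots_near_1}.

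The main obstacle is the delicate treatment of the junction point $a=2$: it lies on both $\lmO$ and $\lmI$ and is excluded from every $S_\delta$, so Lemma~\ref{lem:roots_to_lmcn} supplies no direct information there. The argument above side-steps this by handling the case $\alpha^*=2$ separately, using only that $2\in\lmO$. A second, minor pitfall is ruling out $\alpha^*=0$ (non-trivial zeros drifting to the trivial one), but this is immediate from Lemma~\ref{lem:roots_to_lmcn} since $|\alpha_{n_k}^2/(\alpha_{n_k}-1)|\to 0\neq 4$ along any such sequence.
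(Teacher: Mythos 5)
Your compactness argument is correct and is essentially the paper's own route: the paper simply declares that Lemmas~\ref{lem:no_roots_near_1} and~\ref{lem:roots_to_lmcn} ``immediately lead to'' the corollary, and what you have written is a careful expansion of exactly that deduction. Your extra attention to the edge cases the paper leaves implicit --- boundedness of the zeros, the junction point $a=2$, ruling out $\alpha^*=0$, and the fact that $\lmI\setminus\{2\}$ lies strictly inside $|a-1|<1$ so that Lemma~\ref{lem:no_roots_near_1} applies --- is all sound.
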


Corollary~\ref{cor:outer_lobe_roots} only establishes that as $n$ gets large, all zeros of $D_{2n}(a)$ will be \emph{somewhere} near $\lmO$. We also wish to show the converse of this result -- that every point on $\lmO$ is an accumulation point of zeros of $D_{2n}(a)$, so that the zeros become dense on $\lmO$ as $n\to\infty$. We will prove the following.

\begin{thm}\label{thm:dense}
Let $\epsilon,\delta>0$. Then for every every $n$ sufficiently large and for every point $a^*\in \lmO\cap S_\delta$, there is a zero of $D_{2n}(a)$ in the region $|a-a^*| \leq \epsilon$.
\label{theorem4}   
\end{thm}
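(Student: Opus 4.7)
The plan is to apply the argument principle to the function $F_n(a) := 1 + H_n(a)$, where $H_n := r_n/p_n$, on a small contour around $a^*$. Since $|a-1|>1$ strictly on $\lmO\setminus\{2\}$ (from \Ref{eqn4}, $|a-1|=|a|^2/4$ on $\lmO$, and one checks $|a|>2$ there except at $a=2$), the decomposition $D_{2n} = r_n + p_n$ from equations \Ref{sqrt.asymp.trunc} and \Ref{pole.asymp} is valid in a neighborhood of $a^* \in \lmO\cap S_\delta$, giving $D_{2n} = p_n F_n$. Moreover $p_n$ has no zeros or poles near $a^*$ (since $a^* \neq 0, 1, 2$), so zeros of $D_{2n}$ there coincide with zeros of $F_n$. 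Using the leading term of \Ref{sqrt.asymp.trunc},
\[
H_n(a) = \frac{c(a)}{\sqrt{\pi n^3}}\L\frac{4}{w(a)}\R^n(1+O(1/n)), \qquad c(a) = \frac{a(a-1)}{(a-2)^3},\ w(a) = \frac{a^2}{a-1},
\]
uniformly on compact subsets of $S_\delta\cap\{|a-1|\geq 1\}$.

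Since $w'(a^*) = a^*(a^*-2)/(a^*-1)^2 \neq 0$, the map $s : a \mapsto \log(w(a)/w(a^*))$ is a biholomorphism of a neighborhood of $a^*$ onto a neighborhood of $0 \in \mathbb{C}$, with $\Re s(a) = 0$ on $\lmO$. I would pick $\eta = \eta(\epsilon, a^*) > 0$ small enough that the preimage of the square $R = \{s : |\Re s| \leq \eta,\ |\Im s| \leq \eta\}$ is contained in $\{|a - a^*| \leq \epsilon\}$, then compute the winding of $F_n$ around $0$ along $\partial(s^{-1}(R))$, traversed counterclockwise. Since $|w(a^*)| = 4$, on $R$ one has $|H_n(a)| \asymp n^{-3/2} e^{-n\Re s(a)}$, so $|H_n|$ is exponentially large on the left edge $\Re s = -\eta$ and exponentially small on the right edge $\Re s = \eta$.

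On the right edge, $F_n \approx 1$, contributing negligible winding. On the left edge, $F_n \approx H_n$; writing $s = -\eta + it$ with $t$ running from $\eta$ to $-\eta$, $\arg H_n$ increases by $2n\eta + O(1)$ (only the factor $e^{-ns}$ contributes a rapid phase change), yielding about $n\eta/\pi$ counterclockwise turns around $0$. On the top and bottom edges, $\arg H_n$ varies only slowly (as $\Re s$ changes with $\Im s$ fixed, the factor $e^{-ns}$ has varying modulus but constant argument), so $F_n = 1 + H_n$ traces a near-straight trajectory in $\mathbb{C}$ between $\approx 1$ and $\approx H_n^{\max}$, contributing $O(1)$ winding. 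The total winding is therefore $n\eta/\pi + O(1)$, which exceeds $1$ for $n$ sufficiently large. By the argument principle, $F_n$ has at least one zero in $s^{-1}(R) \subset \{|a - a^*| \leq \epsilon\}$, giving the required zero of $D_{2n}$.

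The main technical obstacles are (i) ensuring $F_n$ does not vanish on $\partial R$ itself, so the winding count is well defined, and (ii) controlling the $O(1/n)$ error in the asymptotic of $H_n$. Issue (i) is handled by perturbing $\eta$ slightly: the zero set of $F_n$ is discrete, so all but finitely many choices of $\eta$ in a small interval keep $\partial R$ away from zeros of $F_n$. Issue (ii) shifts $\arg F_n$ by $O(1)$ per edge and so is dominated by the $n\eta/\pi$ winding contribution for $n$ large. Uniformity of the threshold $n_0 = n_0(\epsilon,\delta)$ as $a^*$ varies over the compact set $\lmO\cap S_\delta$ follows from a finite cover by small disks on which the above local estimates are uniform.
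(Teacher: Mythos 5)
Your argument-principle proof is correct in outline but takes a genuinely different route from the paper. The paper works in the variable $A=a^2/(a-1)$, fixes a rational angle $\theta$, posits a root of the truncated approximation $D^0_{2n}$ of the form $A_n=4e^{\Imi\theta}(1+s_n)$, and solves for $s_n\sim\frac{1}{n}\log(\cdots)\to0$; it then transfers the conclusion back to $D_{2n}$ by appealing to $D^0_{2n}/D_{2n}\to1$. You instead factor $D_{2n}=p_nF_n$ with $F_n=1+r_n/p_n$ (legitimate near $a^*$, since your observation that $|a-1|=|a|^2/4>1$ strictly on $\lmO\setminus\{2\}$ is correct) and count zeros of $F_n$ directly by the winding of $1+H_n$ around a contour straddling $\lmO$, with the winding driven by the phase of $(4(a-1)/a^2)^n$ on the pole-dominant side. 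Your version buys three things: it needs no restriction to rational angles or to the subsequence $n=2kq$; it yields a quantitative count of order $n\eta/\pi$ zeros in the neighbourhood, consistent with spacing $\sim 1/n$ along $\lmO$; and it makes rigorous the step the paper leaves implicit, namely why a zero of an asymptotic approximation forces a nearby zero of $D_{2n}$ itself (which really requires a Rouch\'e/Hurwitz-type argument — i.e., exactly the argument principle you invoke). The paper's route, in exchange, produces an explicit first-order correction $s_n$ to the zero's location, which feeds into Sections~\ref{section4} and~\ref{sec:leadingzero}. The one place you should tighten is the top and bottom edges: a path avoiding $0$ can a priori have large argument variation, so "near-straight, hence $O(1)$ winding" needs the quantitative version — on those edges $\arg H_n$ is confined to a sector of width $o(1)$ while $|H_n|$ is essentially monotone in $\Re s$, so $1+H_n$ stays in a translated thin sector and its argument variation is at most $\pi+o(1)$; combined with your perturbation of $\eta$ to keep $F_n\neq0$ on the contour, this makes the total winding $\frac{n\eta}{\pi}+O(1)>0$ and the proof closes.
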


To show this, we will estimate the locations of the zeros by using an approach similar to that used in reference \cite{BM10}, where it was shown that poles of a certain generating function become dense on parts of the unit circle. 

Define the approximation
\begin{equation}
D^0_{2n}(a) = p_n(a) + r^0_{n}(a) = \frac{a-2}{a-1} \L \frac{a^2}{a-1} \R^n
+ \frac{a}{(a-2)^2} \frac{4^n}{\sqrt{\pi n^3}}. 
\label{eqn16}  
\end{equation}
For $|a-1| \geq 1$, $D_{2n}^0(a)$ is an asymptotic approximation 
for $D_{2n}(a)$; that is, $D_{2n}^0(a)/D_{2n}(a) \to 1$ as $n\to\infty$. 
It follows that, in this region, the zeros of $D_{2n}^0(a)$ give 
asymptotic approximations to the zeros of $D_{2n}(a)$. Note that 
$D_{2n}^0(a)$ is not a good approximation in the region $|a-1|<1$; 
as we will see later, $D_{2n}^0(a)$ has zeros inside this region 
which collect on the inner lobe $\lmI$, while $D_{2n}(a)$ does not.

The form~\Ref{eqn4} of the lima\c{c}on suggests that a change 
of variables will be convenient. Define $A=\sfrac{a^2}{a-1}$ and 
solve for $a$ in terms of $A$:
\begin{equation}
a = a^+ \equiv a^+(A) = \sfrac{1}{2} (A + \sqrt{A}\; \sqrt{A-4}) .
\label{eqn12}   
\end{equation}
When $|A|=4$, the solution $a^+$ is on $\lmO$. (The other solution 
to the quadratic gives a point on $\lmI$.) Setting
\[ f(a) = \frac{a-2}{a-1}\q\hbox{and}\q g(a) = \frac{a}{(a-2)^2},\]
we then have
\begin{equation}
D_{2n}^0(a^+) = f(a^+) A^n
+ g(a^+) \frac{4^n}{\sqrt{\pi n^3}}. \, 
\label{eqn16alt}  
\end{equation}

\begin{proof}[Proof of Theorem~\ref{thm:dense}]
Assume that the complex $A$-plane has a branch cut along the positive 
real axis. Fix integers $p$ and $q$ so that $\theta=\sfrac{p}{q}\pi\in(0,2\pi)$ 
and $a^+(4e^{\Imi\theta}) \in S_\delta$. Set $n=2kq$. We will show that $4e^{\Imi\theta}$ is an accumulation point for roots of $D^0_{2n}(a^+)$ 
in the $A$-plane. This in turn means that $a^+(4e^{\Imi\theta})$ is an 
accumulation point for roots of $D^0_{2n}(a)$, and since $D^0_{2n}(a)$ 
is an asymptotic approximation for $D_{2n}(a)$, the result will follow.

Set $A_n = 4e^{\Imi\theta}(1+s_n)$ for an as-yet (small) unknown $s_n$ 
and write $a^+_n = a^+(A_n)$. Then  substitute $A=A_n$ 
in~\Ref{eqn16alt}. To leading order, this gives
\begin{equation}\label{eqn:sub_An_approx}
\hspace{-2cm}
D_{2n}^0(a^+_n) =
F(4\,e^{\Imi \theta})\L 4^n (1+s_n)^n \R \L 1+O(s_n) \R
 + G(4\,e^{\Imi \theta}) \frac{4^n}{\sqrt{\pi n^3}} \L 1+O(s_n) \R
\end{equation}
where $F(A) = f(a_+)$ and $G(A) = g(a_+)$. (This uses the analyticity 
of $f$ and $g$ away from $a=2$.)

Notice that $F(4\,e^{\Imi \theta})$ is nonzero except when $\theta=0$ 
(this is excluded), and that $G(4\,e^{\Imi \theta})$ is nonzero. Equation 
(\ref{eqn:sub_An_approx}) thus shows that there is a zero of 
(\ref{eqn16alt}) close to $A=4e^{\Imi\theta}$ for our chosen $\theta$. 
This solution is well approximated by ignoring the $O(s_n)$ terms, 
approximating $1+s_n = e^{s_n+O(s_n^2)}$, 
and solving for $s_n$ in (\ref{eqn:sub_An_approx}). This gives
the following approximation for $s_n$:
\begin{equation}
s_n \sim 
\sfrac{1}{n} \log\L \frac{-1}{\sqrt{\pi n^3}} \,
\frac{G(4\, e^{\Imi \theta})}{F (4\, e^{\Imi \theta})} \R
= \sfrac{1}{n} \log\L \frac{-1}{\sqrt{\pi n^3}} \,
\frac{e^{\Imi \theta/2}}{4\,( e^{\Imi \theta}-1)^{3/2} } \R .
\end{equation}
This shows that $s_n \to 0$ as $n\to\infty$, and hence $a^+_n\to 4e^{\Imi\theta}$. Since $\theta$ was an arbitrary rational multiple of $\pi$ and the circle $|A|=4$ maps continuously to $\lmO$ in the $a$-plane, it follows that for
$a^* \in \lmO \cap S_\delta$ there is a choice of $\theta$ and an 
$n$ sufficiently large so that there is a root of $D^0_{2n}(a)$ within 
the region $|a-a^*|\leq \epsilon$. Then since $D^0_{2n}(a) \to D_{2n}(a)$,
the result of the theorem follows.
\end{proof}

Note that more precise approximations for $s_n$ can be obtained by 
replacing $r_n^0(a)$ in equation \Ref{eqn16} with higher-order 
truncations of $r_n(a)$, for example
\begin{equation}
r_n^1(a) = \frac{a}{(a-2)^2}\cdot\frac{4^n}{\sqrt{\pi n^3}}\left(1 - \frac{3a^2}{2(a-2)^2n}\right).
\end{equation}
This changes the form of $g(a)$ and $G(A)$, but since all the $g_\ell(a)$ have a form given by (\ref{eqn:g_ell_form}), the analyticity away from $a=2$ still holds and all of the above steps will still be valid.

\section{Numerical approximation of the zeros}
\label{section4}

The method used to prove Theorem~\ref{thm:dense} gives an approximation 
for the zero of $D_{2n}(a)$ closest to any point in $\lmO \cap S_\delta$. 
However, this does not give much insight as to how the zeros ``move'' as $n$ changes. 
For example, the non-trivial zeros of $D_{2n}(a)$ can be ordered according to their 
argument, and then for some given $k$ (either constant or depending on $n$) the 
behaviour of the $k$-th zero can be investigated. In this section and the next, we 
will pursue this question, using two different methods -- one which works well away 
from $a=2$, and another which specifically probes the $a\to2$ regime.

We continue to assume $|a-1|\geq 1$, so that $D_{2n}(a) = p_n(a) + r_n(a)$.
If $a$ is a zero of $D_{2n}(a)$ then we can write, using equations 
\Ref{sqrt.asymp} and \Ref{pole.asymp}
\begin{equation}
 \L \frac{a^2}{a-1} \R^n = -\L \frac{4^n}{\sqrt{\pi n^3}} \R\!
\L \frac{a(a-1)}{(a-2)^3} + \frac{(a-1)\beta_{1,n}(a)}{n(a-2)} \R .
\label{eqn23}
\end{equation}
This may be put in the form 
\begin{equation}
a^2 = 4(a-1) e^{(2k+1)\pi\Imi/n} \L \frac{1}{\sqrt{\pi n^3}} \R^{\! 1/n}\!
\L \frac{a(a-1)}{(a-2)^3} + \frac{(a-1)\beta_{1,n}(a) }{n(a-2)} \R^{\! 1/n}
\label{eqn24}  
\end{equation}
for $k=0,1,2,\ldots,n-1$. 
The function $\beta_{1,n}(a)$, to leading order, is
given by $ \beta_{1,n}(a) = \Sfrac{3a^3}{2(a-2)^4} + O\L \Sfrac{1}{n} \R$.
For values of $a$ away from the critical value $a=2$, $|\beta_{1,n}(a)|$
is quite small (less than $1$), but at the leading zeros the modulus of 
$\beta_{1,n}(a)$ can be larger.  Nevertheless, since $\beta_{1,n}(a)$ is divided
by $n$ in equation \Ref{eqn24}, it is expected that its contribution will decrease
quickly with increasing $n$, and numerical experimentation gives results
consistent with this.  In figure \ref{figure3} we show the results for
$\beta_{1,n}(a)$ set equal to $4$.

We will now attempt to compute approximate solutions to (\ref{eqn24}) 
by replacing $\beta_{1,n}(a)$ with a \emph{constant} $\beta$. Note 
that if $\beta=0$ then we are finding roots of $D^0_{2n}(a)$ as per
equation (\ref{eqn16}). In Figure \ref{figure3} the solutions for $n\in\{16,32,48\}$ are shown
with $\beta=4$ (open circles).  The exact solutions are shown 
by the bullets for these values of $n$, and the lima\c{c}on
in Figure \ref{figure2} is the closed curve.  

\begin{figure}[t]
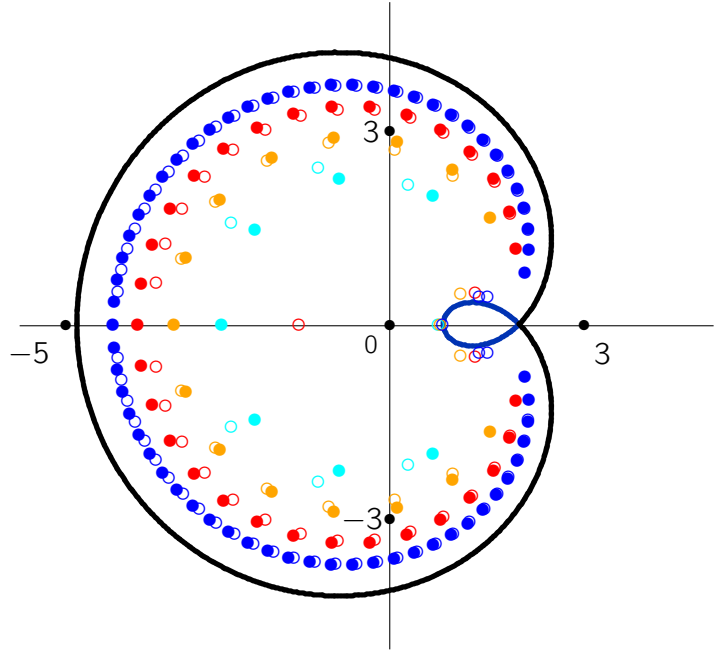

\begin{center}
\beginpicture
\color{black}
\setcoordinatesystem units <3.5pt,3.5pt>
\axes{-40}{-35}{0}{0}{35}{35}{-70}
\put {\footnotesize$0$} at -2 -2
\put {\footnotesize$\bullet$} at -35 0
\put {$-5$} at -39 -3
\put {\footnotesize$\bullet$} at 21 0
\put {$3$} at 23 -3
\put {\footnotesize$\bullet$} at 0 21
\put {$3$} at -2 21
\put {\footnotesize$\bullet$} at 0 -21
\put {$-3$} at -3 -21

\color{RoyalBlue}
\setplotsymbol ({\Large$\cdot$})
\plot 14.000 0.000  13.554 0.419  13.097 0.796  12.632 1.132  12.163 1.425  11.691 1.677  11.222 1.888  10.756 2.059  10.297 2.190  9.848 2.282  9.412 2.338  8.991 2.357  8.587 2.342  8.203 2.295  7.841 2.217  7.504 2.111  7.192 1.978  6.909 1.821  6.655 1.642  6.431 1.444  6.240 1.229  6.083 1.000  5.959 0.760  5.870 0.511  5.817 0.257  5.799 0.000  5.817 -0.257  5.870 -0.511  5.959 -0.760  6.083 -1.000  6.240 -1.229  6.431 -1.444  6.655 -1.642  6.909 -1.821  7.192 -1.978  7.504 -2.111  7.841 -2.217  8.203 -2.295  8.587 -2.342  8.991 -2.357  9.412 -2.338  9.848 -2.282  10.297 -2.190  10.756 -2.059  11.222 -1.888  11.691 -1.677  12.163 -1.425  12.632 -1.132  13.097 -0.796  13.554 -0.419  14.000 0.000 /

\color{black}
\plot 14.000 0.000  14.432 0.460  14.848 0.962  15.244 1.503  15.617 2.084  15.964 2.703  16.282 3.358  16.570 4.049  16.823 4.774  17.040 5.529  17.218 6.315  17.354 7.127  17.447 7.965  17.494 8.825  17.494 9.705  17.444 10.601  17.344 11.511  17.192 12.432  16.987 13.361  16.727 14.295  16.412 15.229  16.042 16.161  15.615 17.088  15.133 18.005  14.594 18.910  14.000 19.799  13.350 20.668  12.646 21.515  11.889 22.334  11.079 23.125  10.218 23.881  9.307 24.602  8.349 25.283  7.344 25.922  6.297 26.514  5.208 27.059  4.081 27.552  2.917 27.992  1.721 28.376  0.494 28.702  -0.760 28.967  -2.037 29.171  -3.334 29.310  -4.648 29.384  -5.975 29.392  -7.311 29.333  -8.653 29.204  -9.997 29.007  -11.339 28.741  -12.674 28.405  -14.000 28.000  -15.312 27.526  -16.606 26.983  -17.879 26.372  -19.126 25.695  -20.344 24.952  -21.529 24.146  -22.678 23.276  -23.786 22.347  -24.852 21.359  -25.870 20.315  -26.839 19.217  -27.754 18.069  -28.614 16.872  -29.416 15.631  -30.156 14.347  -30.833 13.025  -31.445 11.668  -31.990 10.280  -32.465 8.864  -32.870 7.423  -33.203 5.963  -33.463 4.487  -33.650 2.998  -33.762 1.501  -33.799 -0.000  -33.762 -1.501  -33.650 -2.998  -33.463 -4.487  -33.203 -5.963  -32.870 -7.423  -32.465 -8.864  -31.990 -10.280  -31.445 -11.668  -30.833 -13.025  -30.156 -14.347  -29.416 -15.631  -28.614 -16.872  -27.754 -18.069  -26.839 -19.217  -25.870 -20.315  -24.852 -21.359  -23.786 -22.347  -22.678 -23.276  -21.529 -24.146  -20.344 -24.952  -19.126 -25.695  -17.879 -26.372  -16.606 -26.983  -15.312 -27.526  -14.000 -28.000  -12.674 -28.405  -11.339 -28.741  -9.997 -29.007  -8.653 -29.204  -7.311 -29.333  -5.975 -29.392  -4.648 -29.384  -3.334 -29.310  -2.037 -29.171  -0.760 -28.967  0.494 -28.702  1.721 -28.376  2.917 -27.992  4.081 -27.552  5.208 -27.059  6.297 -26.514  7.344 -25.922  8.349 -25.283  9.307 -24.602  10.218 -23.881  11.079 -23.125  11.889 -22.334  12.646 -21.515  13.350 -20.668  14.000 -19.799  14.594 -18.910  15.133 -18.005  15.615 -17.088  16.042 -16.161  16.412 -15.229  16.727 -14.295  16.987 -13.361  17.192 -12.432  17.344 -11.511  17.444 -10.601  17.494 -9.705  17.494 -8.825  17.447 -7.965  17.354 -7.127  17.218 -6.315  17.040 -5.529  16.823 -4.774  16.570 -4.049  16.282 -3.358  15.964 -2.703  15.617 -2.084  15.244 -1.503  14.848 -0.962  14.432 -0.460  14.000 0.000   /

\color{Cyan} 
\multiput {$\bullet$} at -18.197 0.000  -14.579 -10.284  -14.579 10.284  -5.471 -15.825  -5.471 15.825  4.648 -13.999  4.648 13.999   /
\multiput {$\circ$} at  1.973 15.095  -7.732 16.937  -17.148 10.992  5.207 0.000  -17.148 -10.992  -7.732 -16.937  1.973 -15.095    /

\color{Orange}  
\multiput {$\bullet$} at -23.354 0.000  -22.059 -7.240  -22.059 7.240  -18.355 -13.532  -18.355 13.532  -12.758 -18.057  -12.758 18.057  -6.053 -20.232  -6.053 20.232  0.799 -19.788  0.799 19.788  6.772 -16.789  6.772 16.789  10.831 -11.557  10.831 11.557  /
\multiput {$\circ$} at  7.625 -3.342  6.815 16.126  0.539 18.933  -6.581 19.657  -13.273 17.702  -18.828 13.319  -22.502 7.138  5.429 0.000  -22.502 -7.138  -18.828 -13.319  -13.273 -17.702  -6.581 -19.657  0.539 -18.933  6.815 -16.126  7.625 3.342   /

\color{Red} 
\multiput {$\bullet$} at -27.272 0.000  -26.871 -4.398  -26.871 4.398  -25.683 -8.642  -25.683 8.642  -23.752 -12.585  -23.752 12.585  -21.150 -16.089  -21.150 16.089  -17.974 -19.033  -17.974 19.033  -14.343 -21.317  -14.343 21.317  -10.394 -22.863  -10.394 22.863  -6.277 -23.621  -6.277 23.621  -2.151 -23.569  -2.151 23.569  1.821 -22.712  1.821 22.712  5.477 -21.086  5.477 21.086  8.657 -18.750  8.657 18.750  11.203 -15.783  11.203 15.783  12.942 -12.266  12.942 12.266  13.631 -8.227  13.631 8.227  /
\multiput {$\circ$} at  9.215 -3.422  13.015 12.021  11.349 15.506  8.885 18.443  5.798 20.757  2.246 22.373  -1.613 23.235  -5.620 23.310  -9.613 22.591  -13.435 21.100  -16.938 18.886  -19.984 16.021  -22.451 12.599  -24.235 8.726  -25.263 4.501  -9.849 0.000  -25.263 -4.501  -24.235 -8.726  -22.451 -12.599  -19.984 -16.021  -16.938 -18.886  -13.435 -21.100  -9.613 -22.591  -5.620 -23.310  -1.613 -23.235  2.246 -22.373  5.798 -20.757  8.885 -18.443  11.349 -15.506  13.015 -12.021  9.215 3.422   /

\color{Blue} 
\multiput {$\bullet$} at -29.920 0.000  -29.807 -2.458  -29.807 2.458  -29.467 -4.893  -29.467 4.893  -28.905 -7.285  -28.905 7.285  -28.125 -9.611  -28.125 9.611  -27.136 -11.850  -27.136 11.850  -25.946 -13.982  -25.946 13.982  -24.567 -15.989  -24.567 15.989  -23.013 -17.852  -23.013 17.852  -21.298 -19.555  -21.298 19.555  -19.439 -21.082  -19.439 21.082  -17.453 -22.420  -17.453 22.420  -15.361 -23.557  -15.361 23.557  -13.183 -24.485  -13.183 24.485  -10.940 -25.194  -10.940 25.194  -8.653 -25.678  -8.653 25.678  -6.347 -25.935  -6.347 25.935  -4.043 -25.963  -4.043 25.963  -1.765 -25.762  -1.765 25.762  0.463 -25.334  0.463 25.334  2.619 -24.684  2.619 24.684  4.678 -23.820  4.678 23.820  6.619 -22.748  6.619 22.748  8.418 -21.480  8.418 21.480  10.054 -20.028  10.054 20.028  11.504 -18.403  11.504 18.403  12.744 -16.620  12.744 16.620  13.751 -14.691  13.751 14.691  14.495 -12.629  14.495 12.629  14.603 -5.589  14.603 5.589  14.939 -10.440  14.939 10.440  15.021 -8.115  15.021 8.115  /
\multiput {$\circ$} at   10.573 -3.075  9.645 -3.016  14.962 10.322  14.541 12.498  13.822 14.546  12.843 16.462  11.633 18.235  10.217 19.853  8.618 21.302  6.858 22.571  4.958 23.648  2.940 24.524  0.828 25.190  -1.357 25.641  -3.592 25.871  -5.854 25.878  -8.120 25.662  -10.369 25.224  -12.578 24.570  -14.726 23.703  -16.793 22.633  -18.759 21.368  -20.607 19.919  -22.318 18.301  -23.878 16.528  -25.273 14.614  -26.490 12.578  -27.521 10.436  -28.355 8.208  -28.986 5.912  -29.410 3.566  5.676 0.000  -29.410 -3.566  -28.986 -5.912  -28.355 -8.208  -27.521 -10.436  -26.490 -12.578  -25.273 -14.614  -23.878 -16.528  -22.318 -18.301  -20.607 -19.919  -18.759 -21.368  -16.793 -22.633  -14.726 -23.703  -12.578 -24.570  -10.369 -25.224  -8.120 -25.662  -5.854 -25.878  -3.592 -25.871  -1.357 -25.641  0.828 -25.190  2.940 -24.524  4.958 -23.648  6.858 -22.571  8.618 -21.302  10.217 -19.853  11.633 -18.235  12.843 -16.462  13.822 -14.546  14.541 -12.498  14.962 -10.322  9.645 3.016  10.573 3.075  /

\color{black}
\normalcolor

\endpicture
\end{center}
\caption{ Approximate ($\circ$) and exact zeros ($\bullet$) for 
$n\in \{8, 16, 32, 64\}$.  The approximate solutions are obtained by 
solving equation \Ref{eqn24} with $\beta=4$.}
\label{figure3}
\end{figure}

\begin{figure}[t]
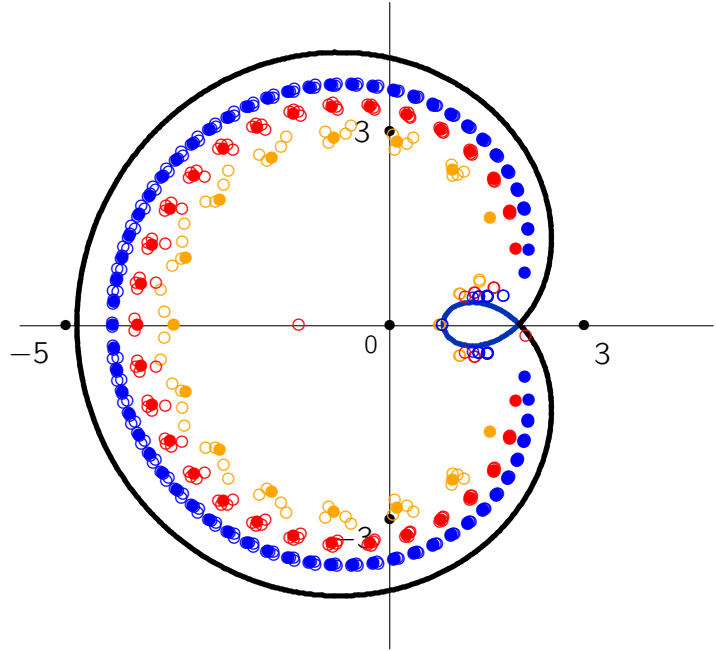

\begin{center}
\beginpicture
\color{black}
\setcoordinatesystem units <3.5pt,3.5pt>
\axes{-40}{-35}{0}{0}{35}{35}{-70}
\put {\footnotesize$0$} at -2 -2
\put {\footnotesize$\bullet$} at -35 0
\put {$-5$} at -39 -3
\put {\footnotesize$\bullet$} at 21 0
\put {$3$} at 23 -3
\put {\footnotesize$\bullet$} at 0 21
\put {$3$} at -3 21
\put {\footnotesize$\bullet$} at 0 -21
\put {$-3$} at -4 -23

\color{RoyalBlue}
\setplotsymbol ({\Large$\cdot$})
\plot 14.000 0.000  13.554 0.419  13.097 0.796  12.632 1.132  12.163 1.425  11.691 1.677  11.222 1.888  10.756 2.059  10.297 2.190  9.848 2.282  9.412 2.338  8.991 2.357  8.587 2.342  8.203 2.295  7.841 2.217  7.504 2.111  7.192 1.978  6.909 1.821  6.655 1.642  6.431 1.444  6.240 1.229  6.083 1.000  5.959 0.760  5.870 0.511  5.817 0.257  5.799 0.000  5.817 -0.257  5.870 -0.511  5.959 -0.760  6.083 -1.000  6.240 -1.229  6.431 -1.444  6.655 -1.642  6.909 -1.821  7.192 -1.978  7.504 -2.111  7.841 -2.217  8.203 -2.295  8.587 -2.342  8.991 -2.357  9.412 -2.338  9.848 -2.282  10.297 -2.190  10.756 -2.059  11.222 -1.888  11.691 -1.677  12.163 -1.425  12.632 -1.132  13.097 -0.796  13.554 -0.419  14.000 0.000 /

\color{black}
\plot 14.000 0.000  14.432 0.460  14.848 0.962  15.244 1.503  15.617 2.084  15.964 2.703  16.282 3.358  16.570 4.049  16.823 4.774  17.040 5.529  17.218 6.315  17.354 7.127  17.447 7.965  17.494 8.825  17.494 9.705  17.444 10.601  17.344 11.511  17.192 12.432  16.987 13.361  16.727 14.295  16.412 15.229  16.042 16.161  15.615 17.088  15.133 18.005  14.594 18.910  14.000 19.799  13.350 20.668  12.646 21.515  11.889 22.334  11.079 23.125  10.218 23.881  9.307 24.602  8.349 25.283  7.344 25.922  6.297 26.514  5.208 27.059  4.081 27.552  2.917 27.992  1.721 28.376  0.494 28.702  -0.760 28.967  -2.037 29.171  -3.334 29.310  -4.648 29.384  -5.975 29.392  -7.311 29.333  -8.653 29.204  -9.997 29.007  -11.339 28.741  -12.674 28.405  -14.000 28.000  -15.312 27.526  -16.606 26.983  -17.879 26.372  -19.126 25.695  -20.344 24.952  -21.529 24.146  -22.678 23.276  -23.786 22.347  -24.852 21.359  -25.870 20.315  -26.839 19.217  -27.754 18.069  -28.614 16.872  -29.416 15.631  -30.156 14.347  -30.833 13.025  -31.445 11.668  -31.990 10.280  -32.465 8.864  -32.870 7.423  -33.203 5.963  -33.463 4.487  -33.650 2.998  -33.762 1.501  -33.799 -0.000  -33.762 -1.501  -33.650 -2.998  -33.463 -4.487  -33.203 -5.963  -32.870 -7.423  -32.465 -8.864  -31.990 -10.280  -31.445 -11.668  -30.833 -13.025  -30.156 -14.347  -29.416 -15.631  -28.614 -16.872  -27.754 -18.069  -26.839 -19.217  -25.870 -20.315  -24.852 -21.359  -23.786 -22.347  -22.678 -23.276  -21.529 -24.146  -20.344 -24.952  -19.126 -25.695  -17.879 -26.372  -16.606 -26.983  -15.312 -27.526  -14.000 -28.000  -12.674 -28.405  -11.339 -28.741  -9.997 -29.007  -8.653 -29.204  -7.311 -29.333  -5.975 -29.392  -4.648 -29.384  -3.334 -29.310  -2.037 -29.171  -0.760 -28.967  0.494 -28.702  1.721 -28.376  2.917 -27.992  4.081 -27.552  5.208 -27.059  6.297 -26.514  7.344 -25.922  8.349 -25.283  9.307 -24.602  10.218 -23.881  11.079 -23.125  11.889 -22.334  12.646 -21.515  13.350 -20.668  14.000 -19.799  14.594 -18.910  15.133 -18.005  15.615 -17.088  16.042 -16.161  16.412 -15.229  16.727 -14.295  16.987 -13.361  17.192 -12.432  17.344 -11.511  17.444 -10.601  17.494 -9.705  17.494 -8.825  17.447 -7.965  17.354 -7.127  17.218 -6.315  17.040 -5.529  16.823 -4.774  16.570 -4.049  16.282 -3.358  15.964 -2.703  15.617 -2.084  15.244 -1.503  14.848 -0.962  14.432 -0.460  14.000 0.000   /

\color{Orange} 
\multiput {$\bullet$} at -23.354 0.000  -22.059 -7.240  -22.059 7.240  -18.355 -13.532  -18.355 13.532  -12.758 -18.057  -12.758 18.057  -6.053 -20.232  -6.053 20.232  0.799 -19.788  0.799 19.788  6.772 -16.789  6.772 16.789  10.831 -11.557  10.831 11.557   /
\multiput {$\circ$} at  7.625 -3.342  6.815 16.126  0.539 18.933  -6.581 19.657  -13.273 17.702  -18.828 13.319  -22.502 7.138  5.429 0.000  -22.502 -7.138  -18.828 -13.319  -13.273 -17.702  -6.581 -19.657  0.539 -18.933  6.815 -16.126  7.625 3.342    /
\multiput {$\circ$} at  8.042 16.606  2.541 20.240  -4.170 21.565  -11.153 20.341  -17.448 16.647  -22.193 10.898  -24.739 3.791  5.370 0.000  -24.739 -3.791  -22.193 -10.898  -17.448 -16.647  -11.153 -20.341  -4.170 -21.565  2.541 -20.240  8.042 -16.606   /
\multiput {$\circ$} at  7.338 16.242  1.882 19.583  -4.766 20.716  -11.734 19.253  -17.948 15.182  -22.404 8.961  -24.365 1.424  -23.490 -6.368  -19.884 -13.300  -14.082 -18.381  -6.968 -20.895  0.351 -20.512  6.690 -17.313  7.545 3.314  9.744 4.627  /
\multiput {$\circ$} at  7.545 -3.314  6.690 17.313  0.351 20.512  -6.968 20.895  -14.082 18.381  -19.884 13.300  -23.490 6.368  -24.365 -1.424  -22.404 -8.961  -17.948 -15.182  -11.734 -19.253  -4.766 -20.716  1.882 -19.583  7.338 -16.242  9.743 4.834   /

\color{Red} 
\multiput {$\bullet$} at -27.272 0.000  -26.871 -4.398  -26.871 4.398  -25.683 -8.642  -25.683 8.642  -23.752 -12.585  -23.752 12.585  -21.150 -16.089  -21.150 16.089  -17.974 -19.033  -17.974 19.033  -14.343 -21.317  -14.343 21.317  -10.394 -22.863  -10.394 22.863  -6.277 -23.621  -6.277 23.621  -2.151 -23.569  -2.151 23.569  1.821 -22.712  1.821 22.712  5.477 -21.086  5.477 21.086  8.657 -18.750  8.657 18.750  11.203 -15.783  11.203 15.783  12.942 -12.266  12.942 12.266  13.631 -8.227  13.631 8.227  /
\multiput {$\circ$} at  9.215 -3.422  13.015 12.021  11.349 15.506  8.885 18.443  5.798 20.757  2.246 22.373  -1.613 23.235  -5.620 23.310  -9.613 22.591  -13.435 21.100  -16.938 18.886  -19.984 16.021  -22.451 12.599  -24.235 8.726  -25.263 4.501  -9.849 0.000  -25.263 -4.501  -24.235 -8.726  -22.451 -12.599  -19.984 -16.021  -16.938 -18.886  -13.435 -21.100  -9.613 -22.591  -5.620 -23.310  -1.613 -23.235  2.246 -22.373  5.798 -20.757  8.885 -18.443  11.349 -15.506  13.015 -12.021  9.215 3.422   /
\multiput {$\circ$} at  13.037 12.177  8.148 -3.076  11.358 15.747  8.855 18.781  5.694 21.197  2.029 22.911  -1.981 23.853  -6.174 23.979  -10.380 23.273  -14.434 21.752  -18.176 19.461  -21.460 16.479  -24.157 12.907  -26.162 8.871  -27.397 4.517  -27.397 -4.517  -26.162 -8.871  -24.157 -12.907  -21.460 -16.479  -18.176 -19.461  -14.434 -21.752  -10.380 -23.273  -6.174 -23.979  -1.981 -23.853  2.029 -22.911  5.694 -21.197  8.855 -18.781  11.358 -15.747  8.148 3.076  13.037 -12.177  11.288 3.967   /
\multiput {$\circ$} at  12.930 12.102  11.238 15.590  8.765 18.527  5.685 20.856  2.141 22.518  -1.727 23.458  -5.781 23.635  -9.872 23.027  -13.847 21.636  -17.553 19.493  -20.844 16.660  -23.587 13.229  -25.673 9.315  -27.015 5.055  -27.556 0.603  -27.272 -3.884  -26.169 -8.240  -24.289 -12.309  -21.701 -15.943  -18.506 -19.010  -14.825 -21.401  -10.803 -23.030  -6.598 -23.844  -2.378 -23.817  1.685 -22.958  5.420 -21.302  8.659 -18.916  11.239 -15.882  12.980 -12.286  9.163 3.421  11.261 3.957  /
\multiput {$\circ$} at  9.163 -3.421  12.980 12.286  11.239 15.882  8.659 18.916  5.420 21.302  1.685 22.958  -2.378 23.817  -6.598 23.844  -10.803 23.030  -14.825 21.401  -18.506 19.010  -21.701 15.943  -24.289 12.309  -26.169 8.240  -27.272 3.884  -27.556 -0.603  -27.015 -5.055  -25.673 -9.315  -23.587 -13.229  -20.844 -16.660  -17.553 -19.493  -13.847 -21.636  -9.872 -23.027  -5.781 -23.635  -1.727 -23.458  2.141 -22.518  5.685 -20.856  8.765 -18.527  11.238 -15.590  12.930 -12.102  14.716 -1.207  /

\color{Blue} 
\multiput {$\bullet$} at -29.920 0.000  -29.807 -2.458  -29.807 2.458  -29.467 -4.893  -29.467 4.893  -28.905 -7.285  -28.905 7.285  -28.125 -9.611  -28.125 9.611  -27.136 -11.850  -27.136 11.850  -25.946 -13.982  -25.946 13.982  -24.567 -15.989  -24.567 15.989  -23.013 -17.852  -23.013 17.852  -21.298 -19.555  -21.298 19.555  -19.439 -21.082  -19.439 21.082  -17.453 -22.420  -17.453 22.420  -15.361 -23.557  -15.361 23.557  -13.183 -24.485  -13.183 24.485  -10.940 -25.194  -10.940 25.194  -8.653 -25.678  -8.653 25.678  -6.347 -25.935  -6.347 25.935  -4.043 -25.963  -4.043 25.963  -1.765 -25.762  -1.765 25.762  0.463 -25.334  0.463 25.334  2.619 -24.684  2.619 24.684  4.678 -23.820  4.678 23.820  6.619 -22.748  6.619 22.748  8.418 -21.480  8.418 21.480  10.054 -20.028  10.054 20.028  11.504 -18.403  11.504 18.403  12.744 -16.620  12.744 16.620  13.751 -14.691  13.751 14.691  14.495 -12.629  14.495 12.629  14.603 -5.589  14.603 5.589  14.939 -10.440  14.939 10.440  15.021 -8.115  15.021 8.115  /
\multiput {$\circ$} at  10.573 -3.075  9.645 -3.016  14.962 10.322  14.541 12.498  13.822 14.546  12.843 16.462  11.633 18.235  10.217 19.853  8.618 21.302  6.858 22.571  4.958 23.648  2.940 24.524  0.828 25.190  -1.357 25.641  -3.592 25.871  -5.854 25.878  -8.120 25.662  -10.369 25.224  -12.578 24.570  -14.726 23.703  -16.793 22.633  -18.759 21.368  -20.607 19.919  -22.318 18.301  -23.878 16.528  -25.273 14.614  -26.490 12.578  -27.521 10.436  -28.355 8.208  -28.986 5.912  -29.410 3.566  5.676 0.000  -29.410 -3.566  -28.986 -5.912  -28.355 -8.208  -27.521 -10.436  -26.490 -12.578  -25.273 -14.614  -23.878 -16.528  -22.318 -18.301  -20.607 -19.919  -18.759 -21.368  -16.793 -22.633  -14.726 -23.703  -12.578 -24.570  -10.369 -25.224  -8.120 -25.662  -5.854 -25.878  -3.592 -25.871  -1.357 -25.641  0.828 -25.190  2.940 -24.524  4.958 -23.648  6.858 -22.571  8.618 -21.302  10.217 -19.853  11.633 -18.235  12.843 -16.462  13.822 -14.546  14.541 -12.498  14.962 -10.322  9.645 3.016  10.573 3.075    /
\multiput {$\circ$} at  10.579 -3.074  9.651 -3.017  9.004 -2.925  14.584 12.472  13.880 14.522  12.919 16.443  11.726 18.225  10.327 19.857  8.743 21.325  6.996 22.616  5.105 23.719  3.093 24.624  0.981 25.321  -1.208 25.805  -3.453 26.068  -5.732 26.109  -8.020 25.925  -10.298 25.517  -12.541 24.889  -14.728 24.044  -16.839 22.990  -18.854 21.735  -20.752 20.289  -22.515 18.666  -24.128 16.878  -25.574 14.941  -26.841 12.873  -27.915 10.692  -28.787 8.415  -29.448 6.065  -29.892 3.660  5.673 0.000  -29.892 -3.660  -29.448 -6.065  -28.787 -8.415  -27.915 -10.692  -26.841 -12.873  -25.574 -14.941  -24.128 -16.878  -22.515 -18.666  -20.752 -20.289  -18.854 -21.735  -16.839 -22.990  -14.728 -24.044  -12.541 -24.889  -10.298 -25.517  -8.020 -25.925  -5.732 -26.109  -3.453 -26.068  -1.208 -25.805  0.981 -25.321  3.093 -24.624  5.105 -23.719  6.996 -22.616  8.743 -21.325  10.327 -19.857  11.726 -18.225  12.919 -16.443  13.880 -14.522  14.584 -12.472  9.004 2.925  9.651 3.017  10.579 3.074    /
\multiput {$\circ$} at  10.550 -3.089  9.617 -3.025  14.926 10.397  14.486 12.586  13.745 14.644  12.743 16.567  11.510 18.343  10.070 19.962  8.446 21.409  6.660 22.673  4.736 23.743  2.693 24.609  0.556 25.264  -3.916 25.915  -6.205 25.903  -8.499 25.666  -10.777 25.202  -13.015 24.517  -15.192 23.615  -17.287 22.502  -19.279 21.189  -21.150 19.685  -22.880 18.004  -24.452 16.161  -25.852 14.170  -27.065 12.050  -28.080 9.819  -28.885 7.499  -29.473 5.108  -29.839 2.670  -29.977 0.205  -29.887 -2.262  -29.569 -4.711  -29.025 -7.118  -28.262 -9.462  -27.285 -11.721  -26.105 -13.875  -24.733 -15.903  -23.182 -17.788  -21.467 -19.512  -19.605 -21.060  -17.614 -22.418  -15.514 -23.574  -13.324 -24.517  -11.068 -25.240  -8.768 -25.736  -6.446 -26.002  -4.127 -26.036  -1.833 -25.838  0.410 -25.411  2.580 -24.760  4.654 -23.891  6.607 -22.814  8.416 -21.538  10.061 -20.076  11.517 -18.441  12.761 -16.647  13.769 -14.706  14.512 -12.631  14.950 -10.428  9.618 3.019  10.549 3.082  12.282 3.122  /
\multiput {$\circ$} at  10.549 -3.082  9.618 -3.019  14.950 10.428  14.512 12.631  13.769 14.706  12.761 16.647  11.517 18.441  10.061 20.076  8.416 21.538  6.607 22.814  4.654 23.891  2.580 24.760  0.410 25.411  -1.833 25.838  -4.127 26.036  -6.446 26.002  -8.768 25.736  -11.068 25.240  -13.324 24.517  -15.514 23.574  -17.614 22.418  -19.605 21.060  -21.467 19.512  -23.182 17.788  -24.733 15.903  -26.105 13.875  -27.285 11.721  -28.262 9.462  -29.025 7.118  -29.569 4.711  -29.887 2.262  -29.977 -0.205  -29.839 -2.670  -29.473 -5.108  -28.885 -7.499  -28.080 -9.819  -27.065 -12.050  -25.852 -14.170  -24.452 -16.161  -22.880 -18.004  -21.150 -19.685  -19.279 -21.189  -17.287 -22.502  -15.192 -23.615  -13.015 -24.517  -10.777 -25.202  -8.499 -25.666  -6.205 -25.903  -3.916 -25.915  0.556 -25.264  2.693 -24.609  4.736 -23.743  6.660 -22.673  8.446 -21.409  10.070 -19.962  11.510 -18.343  12.743 -16.567  13.745 -14.644  14.486 -12.586  14.926 -10.397  9.617 3.025  10.550 3.089  12.287 3.129   /

\color{black}
\normalcolor

\endpicture
\end{center}
\caption{ Approximate ($\circ$) and exact zeros ($\bullet$) for $n=16$, $n=32$ and $n=64$ and
for $\beta\in\{\pm 4, \pm 4\Imi\}$.  For $n=32$ and $n=64$ the zeros and estimated zeros cluster
together, but less so for $n=16$.  Notice that the numerical procedure finds, in some cases,
solutions on the other branch of the lima\c{c}on.}
\label{figure4}
\end{figure}

In Figure \ref{figure4} the effects of $\beta$ on the location of approximate zeros are
examined.  The exact zeros for $n=16$ and $n=32$ are plotted (denoted by bullets).
Approximations for $\beta\in\{\pm 4, \pm 4\Imi\}$ are shown as open circles.  For $n=32$
the estimates cluster close to the exact values.  This is less so for $n=16$, but even in this
case the effects of $\beta$ are damped down in equation \Ref{eqn24}.  Thus, we will 
restrict our attention to the case $\beta=0$, and work on finding approximate solutions to $D^0_{2n}(a) = 0$.

Put $\beta=0$ in equation \Ref{eqn24} to obtain
\begin{equation}
a^2 = 4(a-1) e^{(2k+1)\pi\Imi/n} \L \frac{1}{\sqrt{\pi n^3}} \R^{\! 1/n}\!
\L \frac{a(a-1)}{(a-2)^3} \R^{\! 1/n}.
\label{eqn25}  
\end{equation}
Define the functions
\begin{equation}
\sigma_{k,n} = e^{(2k+1)\pi\Imi/n},\q
h_n =  \L \frac{1}{\sqrt{\pi n^3}} \R^{\! 1/n},\q
H_n(a) = \L \frac{a(a-1)}{(a-2)^3} \R^{\! 1/n} .
\end{equation}
Use these definitions to write equation \Ref{eqn25} in the following form:
\begin{equation}
a^2 = 4(a-1) \sigma_{k,n} \, h_n \, H_n(a) .
\label{eqn27}  
\end{equation}
If it was the case that $H_n(a)$ did not depend on $a$, then this equation can
be solved for $a$.  Thus, our strategy is to find an approximation for $a$
so that $H_n(a)$ can be approximated as a function of $(k,n)$, and then to solve
the resulting equation for $a$.  Since $H_n(a) \to 1$ as $n\to\infty$, replace it by $1$
in equation \Ref{eqn27} to find the reduced equation
\begin{equation}
a^2 = 4(a-1) \sigma_{k,n} \, h_n
\label{eqn28}  
\end{equation}
for approximate zeros.  The solutions of this equation are
\begin{equation}
a'_\pm = 2\sigma_{k,n} h_n \pm 2 \sqrt{\sigma_{k,n}h_n}\; \sqrt{\sigma_{k,n} h_n -  1 } .
\end{equation}
Substituting this in $H_n(a)$ in equation \Ref{eqn27} gives the following asymptotic
expression for $H_n(a_0)$:
\begin{equation}
\hspace{-2.5cm}
H_n(a'_\pm) \sim 1 + \sfrac{1}{n} \log \L 
\frac{(\sigma_{k,n}\pm \sqrt{\sigma_{k,n}}\; \sqrt{\sigma_{k,n}-1}\, )
 (2\sigma_{k,n}\pm2 \sqrt{\sigma_{k,n}}\; \sqrt{\sigma_{k,n}-1}\,-1 )
}{4\L \sigma_{k,n}-1\pm \sqrt{\sigma_{k,n}}\; \sqrt{(\sigma_{k,n}-1) } \R^3\,} \R\!\! .
\label{eqn30}  
\end{equation}
Proceed by substituting $H_n(a)$ in equation \Ref{eqn27} with this asymptotic
expression, and denote the solutions by $a''$.  There are two choices of the
sign in equation \Ref{eqn30}, and in total four solutions are found. However, 
these are in identical pairs so that only two distinct solutions are obtained.  
Asymptotic expansions of these two solutions are
\begin{eqnarray}
\hspace{-2.5cm}
a''_\pm 
& \hspace{-2.0cm}
\simeq 2 \sigma_{k,n} \pm 2  \sqrt{\sigma_{k,n}}\; \sqrt{\sigma_{k,n}-1}\, \nonumber \\
& \hspace{-1.5cm}
\mp \Sfrac{1}{n} \L \Sfrac{3}{2} \log n + 2\log 2 + \Sfrac{1}{2} \log \pi
        \R \frac{\sqrt{\sigma_{k,n}}\, (2 \sigma_{k,n} 
              \pm 2  \sqrt{\sigma_{k,n}}\; \sqrt{\sigma_{k,n}-1}\, -1 )}{\sqrt{\sigma_{k,n}-1}}
                   \nonumber \\
& \hspace{-1.0cm}
\pm \Sfrac{1}{n} \frac{\sqrt{\sigma_{k.n}}\, (2 \sigma_{k,n} 
              \pm 2  \sqrt{\sigma_{k,n}}\; \sqrt{\sigma_{k,n}-1}\, -1 )}{\sqrt{\sigma_{k,n}-1}}
                 \times \nonumber \\
& \hspace{-0.5cm} \log \L
\frac{(\sigma_{k,n} +  \sqrt{\sigma_{k,n}}\; \sqrt{\sigma_{k,n}-1}\,)(2 \sigma_{k,n} 
              + 2 \sqrt{\sigma_{k,n}}\; \sqrt{\sigma_{k,n}-1}\, -1 )}
{\L \sigma_{k,n} +  \sqrt{\sigma_{k,n}}\; \sqrt{\sigma_{k,n}-1}\,-1 \R^3}\R  .
\label{eqn31}  
\end{eqnarray}

\begin{figure}[t]
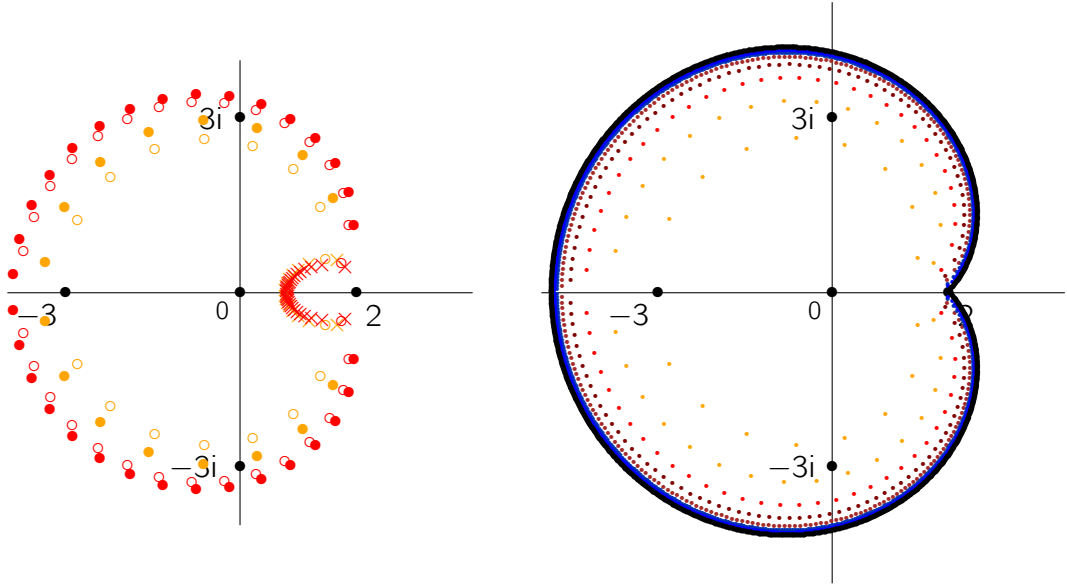

\begin{center}
\beginpicture
\color{black}
\setcoordinatesystem units <2.2pt,2.2pt>
\setplotarea x from -100 to 100, y from -50 to 50
\axes{-90}{-40}{-50}{0}{-10}{40}{-110}
\put {\footnotesize$0$} at -53 -3
\put {\footnotesize$\bullet$} at -80 0
\put {$-3$} at -85 -4
\put {\footnotesize$\bullet$} at -30 0 
\put {$2$} at -27 -4
\put {\footnotesize$\bullet$} at -50 30
\put {$3\Imi$} at -55 30
\put {\footnotesize$\bullet$} at -50 -30
\put {$-3\Imi$} at -58 -30

\color{Orange} 
\multiput {\footnotesize$\circ$} at  -35.290 5.710  -36.181 14.559  -40.814 21.047  -47.827 25.135  -56.143 26.358  -64.654 24.504  -72.260 19.699  -77.965 12.421  -77.965 -12.421  -72.260 -19.699  -64.654 -24.504  -56.143 -26.358  -47.827 -25.135  -40.814 -21.047  -36.181 -14.559  -35.290 -5.710  /
\multiput {\footnotesize$\times$} at  -33.312 -5.596  -38.104 -4.904  -39.930 -3.935  -40.908 -3.096  -41.498 -2.364  -41.869 -1.708  -42.100 -1.101  -42.227 -0.522  -42.227 0.522  -42.100 1.101  -41.869 1.708  -41.498 2.364  -40.908 3.096  -39.930 3.935  -38.104 4.904  -33.312 5.596  /
\multiput {\footnotesize$\bullet$} at  -83.506 -5.048  -83.506 5.048  -80.204 -14.554  -80.204 14.554  -74.009 -22.362  -74.009 22.362  -65.695 -27.566  -65.695 27.566  -56.308 -29.574  -56.308 29.574  -47.059 -28.173  -47.059 28.173  -39.205 -23.532  -39.205 23.532  -34.015 -16.087  -34.015 16.087  /
\color{Red} 
\multiput {\footnotesize$\circ$} at -32.554 4.957  -31.312 11.522  -32.245 16.960  -34.577 21.746  -38.002 25.812  -42.297 29.040  -47.251 31.320  -52.652 32.569  -58.284 32.737  -63.930 31.809  -69.376 29.809  -74.420 26.799  -78.875 22.875  -82.578 18.166  -85.392 12.830  -87.212 7.045  -87.212 -7.045  -85.392 -12.830  -82.578 -18.166  -78.875 -22.875  -74.420 -26.799  -69.376 -29.809  -63.930 -31.809  -58.284 -32.737  -52.652 -32.569  -47.251 -31.320  -42.297 -29.040  -38.002 -25.812  -34.577 -21.746  -32.245 -16.960  -31.312 -11.522  -32.554 -4.957 /
\multiput {\footnotesize$\times$} at  -31.919 -4.448  -35.891 -4.747  -37.714 -4.416  -38.842 -4.019  -39.622 -3.624  -40.194 -3.247  -40.630 -2.889  -40.969 -2.550  -41.237 -2.227  -41.451 -1.919  -41.620 -1.623  -41.754 -1.336  -41.857 -1.058  -41.934 -0.785  -41.987 -0.517  -42.019 -0.251  -42.019 0.251  -41.987 0.517  -41.934 0.785  -41.857 1.058  -41.754 1.336  -41.620 1.623  -41.451 1.919  -41.237 2.227  -40.969 2.550  -40.630 2.889  -40.194 3.247  -39.622 3.624  -38.842 4.019  -37.714 4.416  -35.891 4.747  -31.919 4.448  /
\multiput {\footnotesize$\bullet$} at -89.028 -3.078  -89.028 3.078  -87.948 -9.133  -87.948 9.133  -85.825 -14.888  -85.825 14.888  -82.735 -20.154  -82.735 20.154  -78.785 -24.758  -78.785 24.758  -74.114 -28.552  -74.114 28.552  -68.888 -31.413  -68.888 31.413  -63.293 -33.250  -63.293 33.250  -57.530 -34.009  -57.530 34.009  -51.809 -33.671  -51.809 33.671  -46.345 -32.252  -46.345 32.252  -41.350 -29.805  -41.350 29.805  -37.033 -26.414  -37.033 26.414  -33.603 -22.184  -33.603 22.184  -31.289 -17.223  -31.289 17.223  -30.425 -11.559  -30.425 11.559  /

\color{black}
\normalcolor

\axes{0}{-50}{50}{0}{90}{50}{0}
\put {\footnotesize$0$} at 47 -3
\put {\footnotesize$\bullet$} at 20 0
\put {$-3$} at 15 -4
\put {\footnotesize$\bullet$} at 70 0
\put {$2$} at 73 -4
\put {\footnotesize$\bullet$} at 50 30
\put {$3\Imi$} at 45 30
\put {\footnotesize$\bullet$} at 50 -30
\put {$-3\Imi$} at 43 -30

\color{Orange} 
\multiput {\large$\cdot$} at 64.710 5.710  63.819 14.559  59.186 21.047  52.173 25.135  43.857 26.358  35.346 24.504  27.740 19.699  22.035 12.421  22.035 -12.421  27.740 -19.699  35.346 -24.504  43.857 -26.358  52.173 -25.135  59.186 -21.047  63.819 -14.559  64.710 -5.710   /
\color{Orange} 
\multiput {\large$\cdot$} at 67.446 4.957  68.688 11.522  67.755 16.960  65.423 21.746  61.998 25.812  57.703 29.040  52.749 31.320  47.348 32.569  41.716 32.737  36.070 31.809  30.624 29.809  25.580 26.799  21.125 22.875  17.422 18.166  14.608 12.830  12.788 7.045  12.788 -7.045  14.608 -12.830  17.422 -18.166  21.125 -22.875  25.580 -26.799  30.624 -29.809  36.070 -31.809  41.716 -32.737  47.348 -32.569  52.749 -31.320  57.703 -29.040  61.998 -25.812  65.423 -21.746  67.755 -16.960  68.688 -11.522  67.446 -4.957  /
\color{red} 
\multiput {\large$\cdot$} at 68.806 3.749  70.615 8.024  71.168 11.538  71.051 14.799  70.431 17.877  69.390 20.779  67.978 23.495  66.238 26.009  64.202 28.301  61.904 30.353  59.374 32.145  56.645 33.662  53.747 34.889  50.713 35.815  47.575 36.429  44.365 36.725  41.118 36.699  37.864 36.351  34.637 35.682  31.469 34.698  28.390 33.406  25.431 31.817  22.620 29.946  19.984 27.807  17.550 25.420  15.339 22.805  13.374 19.986  11.673 16.988  10.253 13.837  9.127 10.562  8.307 7.191  7.799 3.756  7.799 -3.756  8.307 -7.191  9.127 -10.562  10.253 -13.837  11.673 -16.988  13.374 -19.986  15.339 -22.805  17.550 -25.420  19.984 -27.807  22.620 -29.946  25.431 -31.817  28.390 -33.406  31.469 -34.698  34.637 -35.682  37.864 -36.351  41.118 -36.699  44.365 -36.725  47.575 -36.429  50.713 -35.815  53.747 -34.889  56.645 -33.662  59.374 -32.145  61.904 -30.353  64.202 -28.301  66.238 -26.009  67.978 -23.495  69.390 -20.779  70.431 -17.877  71.051 -14.799  71.168 -11.538  70.615 -8.024  68.806 -3.749  /
\color{Maroon} 
\multiput {\large$\cdot$}  at 69.450 2.709  71.109 5.419  71.961 7.551  72.452 9.511  72.704 11.375  72.769 13.174  72.677 14.919  72.447 16.615  72.091 18.266  71.619 19.869  71.040 21.426  70.359 22.933  69.583 24.388  68.717 25.790  67.764 27.135  66.731 28.421  65.621 29.645  64.439 30.805  63.189 31.897  61.875 32.920  60.501 33.871  59.073 34.747  57.593 35.548  56.067 36.270  54.498 36.912  52.892 37.473  51.252 37.951  49.583 38.344  47.889 38.652  46.175 38.874  44.446 39.009  42.705 39.057  40.958 39.017  39.208 38.890  37.461 38.675  35.720 38.374  33.990 37.985  32.276 37.511  30.581 36.953  28.911 36.310  27.268 35.586  25.657 34.781  24.083 33.896  22.548 32.935  21.057 31.899  19.614 30.791  18.222 29.612  16.884 28.366  15.604 27.055  14.385 25.683  13.229 24.252  12.140 22.766  11.121 21.227  10.173 19.640  9.299 18.009  8.502 16.336  7.783 14.625  7.143 12.882  6.585 11.108  6.111 9.310  5.720 7.489  5.414 5.652  5.193 3.802  5.059 1.943  5.059 -1.943  5.193 -3.802  5.414 -5.652  5.720 -7.489  6.111 -9.310  6.585 -11.108  7.143 -12.882  7.783 -14.625  8.502 -16.336  9.299 -18.009  10.173 -19.640  11.121 -21.227  12.140 -22.766  13.229 -24.252  14.385 -25.683  15.604 -27.055  16.884 -28.366  18.222 -29.612  19.614 -30.791  21.057 -31.899  22.548 -32.935  24.083 -33.896  25.657 -34.781  27.268 -35.586  28.911 -36.310  30.581 -36.953  32.276 -37.511  33.990 -37.985  35.720 -38.374  37.461 -38.675  39.208 -38.890  40.958 -39.017  42.705 -39.057  44.446 -39.009  46.175 -38.874  47.889 -38.652  49.583 -38.344  51.252 -37.951  52.892 -37.473  54.498 -36.912  56.067 -36.270  57.593 -35.548  59.073 -34.747  60.501 -33.871  61.875 -32.920  63.189 -31.897  64.439 -30.805  65.621 -29.645  66.731 -28.421  67.764 -27.135  68.717 -25.790  69.583 -24.388  70.359 -22.933  71.040 -21.426  71.619 -19.869  72.091 -18.266  72.447 -16.615  72.677 -14.919  72.769 -13.174  72.704 -11.375  72.452 -9.511  71.961 -7.551  71.109 -5.419  69.450 -2.709  /
\color{Brown} 
\multiput {\large$\cdot$} at 69.752 1.925  71.075 3.657  71.842 4.955  72.385 6.119  72.794 7.212  73.105 8.260  73.341 9.274  73.513 10.264  73.630 11.232  73.698 12.183  73.722 13.118  73.704 14.039  73.647 14.946  73.555 15.841  73.428 16.723  73.268 17.594  73.077 18.452  72.855 19.298  72.604 20.132  72.325 20.953  72.019 21.763  71.686 22.559  71.327 23.342  70.943 24.112  70.535 24.868  70.103 25.610  69.648 26.338  69.171 27.052  68.671 27.750  68.151 28.433  67.610 29.100  67.049 29.751  66.468 30.386  65.868 31.004  65.250 31.605  64.614 32.188  63.960 32.754  63.290 33.302  62.604 33.831  61.902 34.342  61.184 34.834  60.453 35.306  59.707 35.759  58.948 36.192  58.176 36.605  57.392 36.997  56.596 37.369  55.788 37.720  54.971 38.050  54.143 38.359  53.305 38.646  52.459 38.912  51.605 39.156  50.743 39.378  49.873 39.578  48.997 39.755  48.116 39.911  47.228 40.043  46.336 40.154  45.440 40.241  44.540 40.306  43.637 40.348  42.732 40.367  41.825 40.364  40.916 40.337  40.007 40.288  39.098 40.216  38.189 40.121  37.281 40.003  36.375 39.863  35.471 39.700  34.570 39.514  33.672 39.306  32.778 39.075  31.889 38.822  31.005 38.547  30.126 38.250  29.253 37.931  28.388 37.591  27.529 37.228  26.678 36.845  25.836 36.441  25.002 36.015  24.178 35.569  23.363 35.103  22.559 34.617  21.766 34.110  20.984 33.585  20.214 33.039  19.457 32.475  18.712 31.893  17.980 31.292  17.263 30.672  16.559 30.036  15.870 29.382  15.196 28.711  14.538 28.024  13.895 27.320  13.268 26.601  12.659 25.867  12.066 25.117  11.490 24.353  10.933 23.575  10.393 22.784  9.872 21.979  9.369 21.162  8.886 20.332  8.422 19.491  7.977 18.639  7.553 17.775  7.148 16.902  6.764 16.019  6.401 15.126  6.058 14.225  5.737 13.316  5.437 12.399  5.158 11.474  4.901 10.543  4.665 9.606  4.452 8.664  4.260 7.716  4.091 6.764  3.944 5.808  3.820 4.849  3.717 3.887  3.638 2.923  3.581 1.957  3.546 0.990  3.546 -0.990  3.581 -1.957  3.638 -2.923  3.717 -3.887  3.820 -4.849  3.944 -5.808  4.091 -6.764  4.260 -7.716  4.452 -8.664  4.665 -9.606  4.901 -10.543  5.158 -11.474  5.437 -12.399  5.737 -13.316  6.058 -14.225  6.401 -15.126  6.764 -16.019  7.148 -16.902  7.553 -17.775  7.977 -18.639  8.422 -19.491  8.886 -20.332  9.369 -21.162  9.872 -21.979  10.393 -22.784  10.933 -23.575  11.490 -24.353  12.066 -25.117  12.659 -25.867  13.268 -26.601  13.895 -27.320  14.538 -28.024  15.196 -28.711  15.870 -29.382  16.559 -30.036  17.263 -30.672  17.980 -31.292  18.712 -31.893  19.457 -32.475  20.214 -33.039  20.984 -33.585  21.766 -34.110  22.559 -34.617  23.363 -35.103  24.178 -35.569  25.002 -36.015  25.836 -36.441  26.678 -36.845  27.529 -37.228  28.388 -37.591  29.253 -37.931  30.126 -38.250  31.005 -38.547  31.889 -38.822  32.778 -39.075  33.672 -39.306  34.570 -39.514  35.471 -39.700  36.375 -39.863  37.281 -40.003  38.189 -40.121  39.098 -40.216  40.007 -40.288  40.916 -40.337  41.825 -40.364  42.732 -40.367  43.637 -40.348  44.540 -40.306  45.440 -40.241  46.336 -40.154  47.228 -40.043  48.116 -39.911  48.997 -39.755  49.873 -39.578  50.743 -39.378  51.605 -39.156  52.459 -38.912  53.305 -38.646  54.143 -38.359  54.971 -38.050  55.788 -37.720  56.596 -37.369  57.392 -36.997  58.176 -36.605  58.948 -36.192  59.707 -35.759  60.453 -35.306  61.184 -34.834  61.902 -34.342  62.604 -33.831  63.290 -33.302  63.960 -32.754  64.614 -32.188  65.250 -31.605  65.868 -31.004  66.468 -30.386  67.049 -29.751  67.610 -29.100  68.151 -28.433  68.671 -27.750  69.171 -27.052  69.648 -26.338  70.103 -25.610  70.535 -24.868  70.943 -24.112  71.327 -23.342  71.686 -22.559  72.019 -21.763  72.325 -20.953  72.604 -20.132  72.855 -19.298  73.077 -18.452  73.268 -17.594  73.428 -16.723  73.555 -15.841  73.647 -14.946  73.704 -14.039  73.722 -13.118  73.698 -12.183  73.630 -11.232  73.513 -10.264  73.341 -9.274  73.105 -8.260  72.794 -7.212  72.385 -6.119  71.842 -4.955  71.075 -3.657  69.752 -1.925  /
\color{RoyalBlue} 
\multiput {\large$\cdot$} at 69.892 1.360  70.888 2.486  71.489 3.295  71.940 4.002  72.305 4.655  72.612 5.273  72.874 5.866  73.101 6.439  73.300 6.997  73.473 7.543  73.625 8.078  73.757 8.605  73.871 9.124  73.969 9.636  74.052 10.142  74.121 10.642  74.176 11.138  74.219 11.629  74.250 12.115  74.270 12.597  74.279 13.075  74.277 13.550  74.265 14.021  74.243 14.488  74.212 14.952  74.171 15.413  74.122 15.871  74.063 16.325  73.996 16.776  73.921 17.224  73.837 17.669  73.746 18.111  73.646 18.550  73.539 18.985  73.424 19.418  73.301 19.847  73.172 20.273  73.035 20.696  72.891 21.116  72.740 21.533  72.582 21.946  72.417 22.356  72.245 22.763  72.067 23.166  71.883 23.567  71.692 23.963  71.495 24.357  71.291 24.746  71.081 25.133  70.866 25.515  70.644 25.894  70.416 26.270  70.183 26.641  69.944 27.009  69.699 27.374  69.448 27.734  69.192 28.090  68.931 28.443  68.664 28.792  68.392 29.136  68.114 29.477  67.832 29.813  67.544 30.146  67.251 30.474  66.954 30.798  66.651 31.117  66.344 31.432  66.032 31.743  65.715 32.050  65.394 32.352  65.068 32.650  64.738 32.943  64.404 33.231  64.065 33.515  63.722 33.794  63.374 34.069  63.023 34.339  62.668 34.604  62.308 34.864  61.945 35.120  61.578 35.370  61.207 35.616  60.833 35.856  60.455 36.092  60.073 36.322  59.688 36.548  59.300 36.768  58.908 36.984  58.513 37.194  58.115 37.399  57.714 37.598  57.310 37.793  56.903 37.982  56.493 38.166  56.080 38.344  55.664 38.517  55.246 38.685  54.825 38.847  54.402 39.004  53.976 39.155  53.548 39.301  53.118 39.442  52.685 39.576  52.251 39.705  51.814 39.829  51.375 39.947  50.934 40.059  50.492 40.166  50.047 40.267  49.601 40.362  49.154 40.452  48.705 40.536  48.254 40.614  47.802 40.687  47.349 40.753  46.894 40.814  46.438 40.869  45.981 40.919  45.523 40.962  45.065 41.000  44.605 41.032  44.144 41.058  43.683 41.078  43.221 41.092  42.759 41.101  42.296 41.104  41.833 41.101  41.369 41.092  40.905 41.077  40.441 41.056  39.977 41.030  39.512 40.997  39.048 40.959  38.584 40.915  38.120 40.865  37.656 40.809  37.193 40.748  36.730 40.680  36.268 40.607  35.806 40.528  35.344 40.444  34.884 40.353  34.424 40.257  33.965 40.155  33.507 40.047  33.050 39.933  32.594 39.814  32.139 39.689  31.685 39.558  31.233 39.422  30.781 39.280  30.332 39.132  29.884 38.979  29.437 38.820  28.992 38.656  28.549 38.486  28.107 38.311  27.668 38.130  27.230 37.943  26.794 37.751  26.361 37.554  25.929 37.351  25.500 37.143  25.073 36.930  24.648 36.711  24.225 36.487  23.805 36.258  23.388 36.023  22.973 35.784  22.561 35.539  22.151 35.289  21.745 35.034  21.341 34.774  20.940 34.509  20.542 34.239  20.147 33.964  19.755 33.684  19.366 33.399  18.980 33.110  18.598 32.815  18.219 32.516  17.844 32.213  17.471 31.904  17.103 31.591  16.738 31.274  16.376 30.952  16.019 30.625  15.665 30.294  15.314 29.959  14.968 29.619  14.625 29.275  14.287 28.927  13.952 28.575  13.622 28.218  13.295 27.858  12.973 27.493  12.655 27.124  12.341 26.752  12.032 26.376  11.726 25.995  11.426 25.611  11.129 25.224  10.837 24.833  10.550 24.438  10.267 24.039  9.989 23.637  9.716 23.232  9.447 22.823  9.183 22.411  8.924 21.996  8.669 21.578  8.420 21.156  8.175 20.732  7.936 20.304  7.701 19.873  7.471 19.440  7.246 19.004  7.027 18.565  6.813 18.123  6.603 17.679  6.399 17.232  6.200 16.783  6.007 16.331  5.818 15.877  5.635 15.420  5.458 14.961  5.285 14.501  5.118 14.038  4.957 13.573  4.801 13.106  4.650 12.637  4.505 12.166  4.365 11.694  4.231 11.219  4.103 10.744  3.980 10.266  3.862 9.787  3.750 9.307  3.644 8.825  3.544 8.343  3.449 7.858  3.360 7.373  3.276 6.887  3.198 6.400  3.126 5.911  3.060 5.422  2.999 4.932  2.944 4.442  2.895 3.951  2.852 3.459  2.814 2.966  2.782 2.474  2.756 1.981  2.736 1.487  2.721 0.994  2.713 0.500  2.713 -0.500  2.721 -0.994  2.736 -1.487  2.756 -1.981  2.782 -2.474  2.814 -2.966  2.852 -3.459  2.895 -3.951  2.944 -4.442  2.999 -4.932  3.060 -5.422  3.126 -5.911  3.198 -6.400  3.276 -6.887  3.360 -7.373  3.449 -7.858  3.544 -8.343  3.644 -8.825  3.750 -9.307  3.862 -9.787  3.980 -10.266  4.103 -10.744  4.231 -11.219  4.365 -11.694  4.505 -12.166  4.650 -12.637  4.801 -13.106  4.957 -13.573  5.118 -14.038  5.285 -14.501  5.458 -14.961  5.635 -15.420  5.818 -15.877  6.007 -16.331  6.200 -16.783  6.399 -17.232  6.603 -17.679  6.813 -18.123  7.027 -18.565  7.246 -19.004  7.471 -19.440  7.701 -19.873  7.936 -20.304  8.175 -20.732  8.420 -21.156  8.669 -21.578  8.924 -21.996  9.183 -22.411  9.447 -22.823  9.716 -23.232  9.989 -23.637  10.267 -24.039  10.550 -24.438  10.837 -24.833  11.129 -25.224  11.426 -25.611  11.726 -25.995  12.032 -26.376  12.341 -26.752  12.655 -27.124  12.973 -27.493  13.295 -27.858  13.622 -28.218  13.952 -28.575  14.287 -28.927  14.625 -29.275  14.968 -29.619  15.314 -29.959  15.665 -30.294  16.019 -30.625  16.376 -30.952  16.738 -31.274  17.103 -31.591  17.471 -31.904  17.844 -32.213  18.219 -32.516  18.598 -32.815  18.980 -33.110  19.366 -33.399  19.755 -33.684  20.147 -33.964  20.542 -34.239  20.940 -34.509  21.341 -34.774  21.745 -35.034  22.151 -35.289  22.561 -35.539  22.973 -35.784  23.388 -36.023  23.805 -36.258  24.225 -36.487  24.648 -36.711  25.073 -36.930  25.500 -37.143  25.929 -37.351  26.361 -37.554  26.794 -37.751  27.230 -37.943  27.668 -38.130  28.107 -38.311  28.549 -38.486  28.992 -38.656  29.437 -38.820  29.884 -38.979  30.332 -39.132  30.781 -39.280  31.233 -39.422  31.685 -39.558  32.139 -39.689  32.594 -39.814  33.050 -39.933  33.507 -40.047  33.965 -40.155  34.424 -40.257  34.884 -40.353  35.344 -40.444  35.806 -40.528  36.268 -40.607  36.730 -40.680  37.193 -40.748  37.656 -40.809  38.120 -40.865  38.584 -40.915  39.048 -40.959  39.512 -40.997  39.977 -41.030  40.441 -41.056  40.905 -41.077  41.369 -41.092  41.833 -41.101  42.296 -41.104  42.759 -41.101  43.221 -41.092  43.683 -41.078  44.144 -41.058  44.605 -41.032  45.065 -41.000  45.523 -40.962  45.981 -40.919  46.438 -40.869  46.894 -40.814  47.349 -40.753  47.802 -40.687  48.254 -40.614  48.705 -40.536  49.154 -40.452  49.601 -40.362  50.047 -40.267  50.492 -40.166  50.934 -40.059  51.375 -39.947  51.814 -39.829  52.251 -39.705  52.685 -39.576  53.118 -39.442  53.548 -39.301  53.976 -39.155  54.402 -39.004  54.825 -38.847  55.246 -38.685  55.664 -38.517  56.080 -38.344  56.493 -38.166  56.903 -37.982  57.310 -37.793  57.714 -37.598  58.115 -37.399  58.513 -37.194  58.908 -36.984  59.300 -36.768  59.688 -36.548  60.073 -36.322  60.455 -36.092  60.833 -35.856  61.207 -35.616  61.578 -35.370  61.945 -35.120  62.308 -34.864  62.668 -34.604  63.023 -34.339  63.374 -34.069  63.722 -33.794  64.065 -33.515  64.404 -33.231  64.738 -32.943  65.068 -32.650  65.394 -32.352  65.715 -32.050  66.032 -31.743  66.344 -31.432  66.651 -31.117  66.954 -30.798  67.251 -30.474  67.544 -30.146  67.832 -29.813  68.114 -29.477  68.392 -29.136  68.664 -28.792  68.931 -28.443  69.192 -28.090  69.448 -27.734  69.699 -27.374  69.944 -27.009  70.183 -26.641  70.416 -26.270  70.644 -25.894  70.866 -25.515  71.081 -25.133  71.291 -24.746  71.495 -24.357  71.692 -23.963  71.883 -23.567  72.067 -23.166  72.245 -22.763  72.417 -22.356  72.582 -21.946  72.740 -21.533  72.891 -21.116  73.035 -20.696  73.172 -20.273  73.301 -19.847  73.424 -19.418  73.539 -18.985  73.646 -18.550  73.746 -18.111  73.837 -17.669  73.921 -17.224  73.996 -16.776  74.063 -16.325  74.122 -15.871  74.171 -15.413  74.212 -14.952  74.243 -14.488  74.265 -14.021  74.277 -13.550  74.279 -13.075  74.270 -12.597  74.250 -12.115  74.219 -11.629  74.176 -11.138  74.121 -10.642  74.052 -10.142  73.969 -9.636  73.871 -9.124  73.757 -8.605  73.625 -8.078  73.473 -7.543  73.300 -6.997  73.101 -6.439  72.874 -5.866  72.612 -5.273  72.305 -4.655  71.940 -4.002  71.489 -3.295  70.888 -2.486  69.892 -1.360   /
\color{Blue} 
\multiput {\large$\cdot$} at
69.957 0.959  70.686 1.705  71.133 2.222  71.475 2.665  71.759 3.068  72.004 3.443  72.220 3.800  72.415 4.142  72.592 4.473  72.753 4.793  72.903 5.106  73.040 5.413  73.168 5.713  73.288 6.008  73.399 6.299  73.503 6.586  73.600 6.869  73.691 7.148  73.776 7.425  73.855 7.699  73.929 7.971  73.999 8.240  74.063 8.507  74.124 8.772  74.180 9.035  74.231 9.296  74.279 9.556  74.323 9.814  74.364 10.070  74.401 10.325  74.434 10.579  74.464 10.831  74.491 11.082  74.515 11.332  74.535 11.581  74.553 11.828  74.568 12.074  74.579 12.320  74.588 12.564  74.595 12.807  74.598 13.050  74.599 13.291  74.597 13.531  74.593 13.771  74.586 14.009  74.576 14.247  74.565 14.484  74.550 14.719  74.534 14.954  74.515 15.189  74.494 15.422  74.470 15.654  74.444 15.886  74.416 16.117  74.386 16.347  74.354 16.576  74.319 16.805  74.283 17.032  74.244 17.259  74.203 17.485  74.160 17.710  74.115 17.935  74.069 18.158  74.020 18.381  73.969 18.603  73.916 18.825  73.861 19.045  73.804 19.265  73.746 19.484  73.685 19.702  73.623 19.919  73.559 20.136  73.493 20.351  73.425 20.566  73.355 20.780  73.284 20.994  73.210 21.206  73.135 21.418  73.058 21.629  72.980 21.839  72.899 22.048  72.817 22.256  72.734 22.464  72.648 22.670  72.561 22.876  72.472 23.081  72.382 23.285  72.290 23.489  72.196 23.691  72.101 23.893  72.004 24.094  71.905 24.293  71.805 24.492  71.703 24.690  71.600 24.888  71.495 25.084  71.389 25.279  71.281 25.474  71.172 25.668  71.061 25.860  70.948 26.052  70.834 26.243  70.719 26.433  70.602 26.622  70.483 26.810  70.363 26.997  70.242 27.183  70.119 27.368  69.995 27.552  69.870 27.736  69.743 27.918  69.614 28.099  69.484 28.280  69.353 28.459  69.221 28.637  69.087 28.815  68.952 28.991  68.815 29.166  68.677 29.340  68.538 29.514  68.397 29.686  68.255 29.857  68.112 30.027  67.968 30.196  67.822 30.364  67.675 30.531  67.527 30.697  67.377 30.862  67.226 31.026  67.074 31.189  66.921 31.350  66.767 31.511  66.611 31.670  66.454 31.829  66.296 31.986  66.137 32.142  65.977 32.297  65.815 32.451  65.652 32.604  65.488 32.755  65.323 32.906  65.157 33.055  64.990 33.203  64.822 33.350  64.652 33.496  64.482 33.641  64.310 33.784  64.138 33.926  63.964 34.068  63.789 34.208  63.613 34.346  63.436 34.484  63.258 34.620  63.079 34.755  62.899 34.889  62.718 35.022  62.536 35.154  62.354 35.284  62.170 35.413  61.985 35.541  61.799 35.667  61.612 35.793  61.424 35.917  61.236 36.040  61.046 36.161  60.856 36.282  60.664 36.401  60.472 36.518  60.279 36.635  60.085 36.750  59.890 36.864  59.694 36.977  59.498 37.088  59.300 37.198  59.102 37.307  58.903 37.414  58.703 37.520  58.502 37.625  58.301 37.729  58.098 37.831  57.895 37.932  57.692 38.031  57.487 38.129  57.282 38.226  57.076 38.322  56.869 38.416  56.661 38.508  56.453 38.600  56.244 38.690  56.034 38.779  55.824 38.866  55.613 38.952  55.401 39.037  55.189 39.120  54.976 39.202  54.763 39.282  54.548 39.361  54.333 39.439  54.118 39.515  53.902 39.590  53.685 39.664  53.468 39.736  53.250 39.807  53.032 39.876  52.813 39.944  52.594 40.011  52.374 40.076  52.153 40.139  51.932 40.202  51.710 40.263  51.488 40.322  51.266 40.380  51.043 40.437  50.819 40.492  50.596 40.545  50.371 40.598  50.146 40.649  49.921 40.698  49.695 40.746  49.469 40.792  49.243 40.838  49.016 40.881  48.789 40.923  48.561 40.964  48.333 41.003  48.105 41.041  47.876 41.078  47.647 41.113  47.418 41.146  47.188 41.178  46.958 41.208  46.728 41.238  46.497 41.265  46.266 41.291  46.035 41.316  45.804 41.339  45.572 41.361  45.340 41.381  45.108 41.400  44.876 41.417  44.643 41.433  44.411 41.448  44.178 41.460  43.945 41.472  43.711 41.482  43.478 41.490  43.244 41.497  43.010 41.503  42.777 41.507  42.543 41.509  42.308 41.510  42.074 41.510  41.840 41.508  41.605 41.505  41.371 41.500  41.136 41.494  40.902 41.486  40.667 41.477  40.432 41.466  40.198 41.454  39.963 41.440  39.728 41.425  39.493 41.408  39.258 41.390  39.024 41.370  38.789 41.349  38.554 41.326  38.319 41.302  38.085 41.277  37.850 41.250  37.616 41.221  37.381 41.191  37.147 41.160  36.913 41.127  36.679 41.093  36.445 41.057  36.211 41.019  35.977 40.981  35.744 40.940  35.510 40.898  35.277 40.855  35.044 40.811  34.811 40.764  34.578 40.717  34.346 40.668  34.114 40.617  33.881 40.565  33.650 40.512  33.418 40.457  33.187 40.401  32.956 40.343  32.725 40.284  32.494 40.223  32.264 40.161  32.034 40.097  31.804 40.032  31.575 39.966  31.346 39.898  31.117 39.829  30.889 39.758  30.661 39.686  30.433 39.612  30.206 39.537  29.979 39.461  29.753 39.383  29.527 39.304  29.301 39.223  29.076 39.141  28.851 39.058  28.626 38.973  28.402 38.886  28.179 38.799  27.956 38.710  27.733 38.619  27.511 38.527  27.289 38.434  27.068 38.339  26.847 38.243  26.627 38.146  26.407 38.047  26.188 37.947  25.970 37.845  25.752 37.743  25.534 37.638  25.317 37.533  25.101 37.426  24.885 37.317  24.670 37.208  24.455 37.097  24.241 36.984  24.028 36.871  23.815 36.756  23.603 36.640  23.391 36.522  23.180 36.403  22.970 36.283  22.761 36.161  22.552 36.038  22.343 35.914  22.136 35.789  21.929 35.662  21.723 35.534  21.517 35.404  21.313 35.274  21.109 35.142  20.906 35.009  20.703 34.874  20.501 34.739  20.300 34.602  20.100 34.464  19.900 34.324  19.702 34.184  19.504 34.042  19.307 33.899  19.110 33.754  18.915 33.609  18.720 33.462  18.526 33.314  18.333 33.165  18.141 33.014  17.950 32.863  17.759 32.710  17.570 32.556  17.381 32.401  17.193 32.244  17.006 32.087  16.820 31.928  16.635 31.769  16.451 31.608  16.267 31.446  16.085 31.282  15.904 31.118  15.723 30.952  15.543 30.786  15.365 30.618  15.187 30.449  15.010 30.279  14.834 30.108  14.660 29.936  14.486 29.763  14.313 29.589  14.141 29.413  13.970 29.237  13.801 29.060  13.632 28.881  13.464 28.702  13.297 28.521  13.132 28.339  12.967 28.157  12.803 27.973  12.641 27.788  12.479 27.603  12.319 27.416  12.160 27.228  12.001 27.039  11.844 26.850  11.688 26.659  11.533 26.468  11.379 26.275  11.226 26.081  11.075 25.887  10.924 25.692  10.775 25.495  10.627 25.298  10.479 25.100  10.333 24.901  10.189 24.701  10.045 24.500  9.903 24.298  9.761 24.095  9.621 23.892  9.482 23.688  9.344 23.482  9.208 23.276  9.072 23.069  8.938 22.861  8.805 22.653  8.673 22.443  8.543 22.233  8.413 22.022  8.285 21.810  8.158 21.598  8.033 21.384  7.908 21.170  7.785 20.955  7.663 20.739  7.543 20.523  7.423 20.306  7.305 20.088  7.188 19.869  7.073 19.650  6.958 19.429  6.845 19.209  6.734 18.987  6.623 18.765  6.514 18.542  6.406 18.318  6.300 18.094  6.194 17.869  6.090 17.644  5.988 17.417  5.886 17.191  5.787 16.963  5.688 16.735  5.591 16.506  5.495 16.277  5.400 16.047  5.307 15.817  5.215 15.586  5.124 15.354  5.035 15.122  4.947 14.889  4.860 14.656  4.775 14.422  4.692 14.188  4.609 13.953  4.528 13.718  4.448 13.482  4.370 13.246  4.293 13.009  4.218 12.772  4.144 12.534  4.071 12.296  3.999 12.057  3.929 11.818  3.861 11.579  3.794 11.339  3.728 11.099  3.664 10.858  3.601 10.617  3.539 10.375  3.479 10.133  3.420 9.891  3.363 9.649  3.307 9.406  3.253 9.162  3.200 8.919  3.148 8.675  3.098 8.430  3.049 8.186  3.002 7.941  2.956 7.696  2.912 7.450  2.869 7.205  2.827 6.959  2.787 6.713  2.749 6.466  2.712 6.219  2.676 5.972  2.642 5.725  2.609 5.478  2.577 5.230  2.547 4.983  2.519 4.735  2.492 4.487  2.466 4.238  2.442 3.990  2.419 3.741  2.398 3.493  2.378 3.244  2.360 2.995  2.343 2.746  2.328 2.497  2.314 2.247  2.302 1.998  2.291 1.749  2.281 1.499  2.273 1.250  2.266 1.000  2.261 0.751  2.258 0.501  2.255 0.251  2.255 -0.251  2.258 -0.501  2.261 -0.751  2.266 -1.000  2.273 -1.250  2.281 -1.499  2.291 -1.749  2.302 -1.998  2.314 -2.247  2.328 -2.497  2.343 -2.746  2.360 -2.995  2.378 -3.244  2.398 -3.493  2.419 -3.741  2.442 -3.990  2.466 -4.238  2.492 -4.487  2.519 -4.735  2.547 -4.983  2.577 -5.230  2.609 -5.478  2.642 -5.725  2.676 -5.972  2.712 -6.219  2.749 -6.466  2.787 -6.713  2.827 -6.959  2.869 -7.205  2.912 -7.450  2.956 -7.696  3.002 -7.941  3.049 -8.186  3.098 -8.430  3.148 -8.675  3.200 -8.919  3.253 -9.162  3.307 -9.406  3.363 -9.649  3.420 -9.891  3.479 -10.133  3.539 -10.375  3.601 -10.617  3.664 -10.858  3.728 -11.099  3.794 -11.339  3.861 -11.579  3.929 -11.818  3.999 -12.057  4.071 -12.296  4.144 -12.534  4.218 -12.772  4.293 -13.009  4.370 -13.246  4.448 -13.482  4.528 -13.718  4.609 -13.953  4.692 -14.188  4.775 -14.422  4.860 -14.656  4.947 -14.889  5.035 -15.122  5.124 -15.354  5.215 -15.586  5.307 -15.817  5.400 -16.047  5.495 -16.277  5.591 -16.506  5.688 -16.735  5.787 -16.963  5.886 -17.191  5.988 -17.417  6.090 -17.644  6.194 -17.869  6.300 -18.094  6.406 -18.318  6.514 -18.542  6.623 -18.765  6.734 -18.987  6.845 -19.209  6.958 -19.429  7.073 -19.650  7.188 -19.869  7.305 -20.088  7.423 -20.306  7.543 -20.523  7.663 -20.739  7.785 -20.955  7.908 -21.170  8.033 -21.384  8.158 -21.598  8.285 -21.810  8.413 -22.022  8.543 -22.233  8.673 -22.443  8.805 -22.653  8.938 -22.861  9.072 -23.069  9.208 -23.276  9.344 -23.482  9.482 -23.688  9.621 -23.892  9.761 -24.095  9.903 -24.298  10.045 -24.500  10.189 -24.701  10.333 -24.901  10.479 -25.100  10.627 -25.298  10.775 -25.495  10.924 -25.692  11.075 -25.887  11.226 -26.081  11.379 -26.275  11.533 -26.468  11.688 -26.659  11.844 -26.850  12.001 -27.039  12.160 -27.228  12.319 -27.416  12.479 -27.603  12.641 -27.788  12.803 -27.973  12.967 -28.157  13.132 -28.339  13.297 -28.521  13.464 -28.702  13.632 -28.881  13.801 -29.060  13.970 -29.237  14.141 -29.413  14.313 -29.589  14.486 -29.763  14.660 -29.936  14.834 -30.108  15.010 -30.279  15.187 -30.449  15.365 -30.618  15.543 -30.786  15.723 -30.952  15.904 -31.118  16.085 -31.282  16.267 -31.446  16.451 -31.608  16.635 -31.769  16.820 -31.928  17.006 -32.087  17.193 -32.244  17.381 -32.401  17.570 -32.556  17.759 -32.710  17.950 -32.863  18.141 -33.014  18.333 -33.165  18.526 -33.314  18.720 -33.462  18.915 -33.609  19.110 -33.754  19.307 -33.899  19.504 -34.042  19.702 -34.184  19.900 -34.324  20.100 -34.464  20.300 -34.602  20.501 -34.739  20.703 -34.874  20.906 -35.009  21.109 -35.142  21.313 -35.274  21.517 -35.404  21.723 -35.534  21.929 -35.662  22.136 -35.789  22.343 -35.914  22.552 -36.038  22.761 -36.161  22.970 -36.283  23.180 -36.403  23.391 -36.522  23.603 -36.640  23.815 -36.756  24.028 -36.871  24.241 -36.984  24.455 -37.097  24.670 -37.208  24.885 -37.317  25.101 -37.426  25.317 -37.533  25.534 -37.638  25.752 -37.743  25.970 -37.845  26.188 -37.947  26.407 -38.047  26.627 -38.146  26.847 -38.243  27.068 -38.339  27.289 -38.434  27.511 -38.527  27.733 -38.619  27.956 -38.710  28.179 -38.799  28.402 -38.886  28.626 -38.973  28.851 -39.058  29.076 -39.141  29.301 -39.223  29.527 -39.304  29.753 -39.383  29.979 -39.461  30.206 -39.537  30.433 -39.612  30.661 -39.686  30.889 -39.758  31.117 -39.829  31.346 -39.898  31.575 -39.966  31.804 -40.032  32.034 -40.097  32.264 -40.161  32.494 -40.223  32.725 -40.284  32.956 -40.343  33.187 -40.401  33.418 -40.457  33.650 -40.512  33.881 -40.565  34.114 -40.617  34.346 -40.668  34.578 -40.717  34.811 -40.764  35.044 -40.811  35.277 -40.855  35.510 -40.898  35.744 -40.940  35.977 -40.981  36.211 -41.019  36.445 -41.057  36.679 -41.093  36.913 -41.127  37.147 -41.160  37.381 -41.191  37.616 -41.221  37.850 -41.250  38.085 -41.277  38.319 -41.302  38.554 -41.326  38.789 -41.349  39.024 -41.370  39.258 -41.390  39.493 -41.408  39.728 -41.425  39.963 -41.440  40.198 -41.454  40.432 -41.466  40.667 -41.477  40.902 -41.486  41.136 -41.494  41.371 -41.500  41.605 -41.505  41.840 -41.508  42.074 -41.510  42.308 -41.510  42.543 -41.509  42.777 -41.507  43.010 -41.503  43.244 -41.497  43.478 -41.490  43.711 -41.482  43.945 -41.472  44.178 -41.460  44.411 -41.448  44.643 -41.433  44.876 -41.417  45.108 -41.400  45.340 -41.381  45.572 -41.361  45.804 -41.339  46.035 -41.316  46.266 -41.291  46.497 -41.265  46.728 -41.238  46.958 -41.208  47.188 -41.178  47.418 -41.146  47.647 -41.113  47.876 -41.078  48.105 -41.041  48.333 -41.003  48.561 -40.964  48.789 -40.923  49.016 -40.881  49.243 -40.838  49.469 -40.792  49.695 -40.746  49.921 -40.698  50.146 -40.649  50.371 -40.598  50.596 -40.545  50.819 -40.492  51.043 -40.437  51.266 -40.380  51.488 -40.322  51.710 -40.263  51.932 -40.202  52.153 -40.139  52.374 -40.076  52.594 -40.011  52.813 -39.944  53.032 -39.876  53.250 -39.807  53.468 -39.736  53.685 -39.664  53.902 -39.590  54.118 -39.515  54.333 -39.439  54.548 -39.361  54.763 -39.282  54.976 -39.202  55.189 -39.120  55.401 -39.037  55.613 -38.952  55.824 -38.866  56.034 -38.779  56.244 -38.690  56.453 -38.600  56.661 -38.508  56.869 -38.416  57.076 -38.322  57.282 -38.226  57.487 -38.129  57.692 -38.031  57.895 -37.932  58.098 -37.831  58.301 -37.729  58.502 -37.625  58.703 -37.520  58.903 -37.414  59.102 -37.307  59.300 -37.198  59.498 -37.088  59.694 -36.977  59.890 -36.864  60.085 -36.750  60.279 -36.635  60.472 -36.518  60.664 -36.401  60.856 -36.282  61.046 -36.161  61.236 -36.040  61.424 -35.917  61.612 -35.793  61.799 -35.667  61.985 -35.541  62.170 -35.413  62.354 -35.284  62.536 -35.154  62.718 -35.022  62.899 -34.889  63.079 -34.755  63.258 -34.620  63.436 -34.484  63.613 -34.346  63.789 -34.208  63.964 -34.068  64.138 -33.926  64.310 -33.784  64.482 -33.641  64.652 -33.496  64.822 -33.350  64.990 -33.203  65.157 -33.055  65.323 -32.906  65.488 -32.755  65.652 -32.604  65.815 -32.451  65.977 -32.297  66.137 -32.142  66.296 -31.986  66.454 -31.829  66.611 -31.670  66.767 -31.511  66.921 -31.350  67.074 -31.189  67.226 -31.026  67.377 -30.862  67.527 -30.697  67.675 -30.531  67.822 -30.364  67.968 -30.196  68.112 -30.027  68.255 -29.857  68.397 -29.686  68.538 -29.514  68.677 -29.340  68.815 -29.166  68.952 -28.991  69.087 -28.815  69.221 -28.637  69.353 -28.459  69.484 -28.280  69.614 -28.099  69.743 -27.918  69.870 -27.736  69.995 -27.552  70.119 -27.368  70.242 -27.183  70.363 -26.997  70.483 -26.810  70.602 -26.622  70.719 -26.433  70.834 -26.243  70.948 -26.052  71.061 -25.860  71.172 -25.668  71.281 -25.474  71.389 -25.279  71.495 -25.084  71.600 -24.888  71.703 -24.690  71.805 -24.492  71.905 -24.293  72.004 -24.094  72.101 -23.893  72.196 -23.691  72.290 -23.489  72.382 -23.285  72.472 -23.081  72.561 -22.876  72.648 -22.670  72.734 -22.464  72.817 -22.256  72.899 -22.048  72.980 -21.839  73.058 -21.629  73.135 -21.418  73.210 -21.206  73.284 -20.994  73.355 -20.780  73.425 -20.566  73.493 -20.351  73.559 -20.136  73.623 -19.919  73.685 -19.702  73.746 -19.484  73.804 -19.265  73.861 -19.045  73.916 -18.825  73.969 -18.603  74.020 -18.381  74.069 -18.158  74.115 -17.935  74.160 -17.710  74.203 -17.485  74.244 -17.259  74.283 -17.032  74.319 -16.805  74.354 -16.576  74.386 -16.347  74.416 -16.117  74.444 -15.886  74.470 -15.654  74.494 -15.422  74.515 -15.189  74.534 -14.954  74.550 -14.719  74.565 -14.484  74.576 -14.247  74.586 -14.009  74.593 -13.771  74.597 -13.531  74.599 -13.291  74.598 -13.050  74.595 -12.807  74.588 -12.564  74.579 -12.320  74.568 -12.074  74.553 -11.828  74.535 -11.581  74.515 -11.332  74.491 -11.082  74.464 -10.831  74.434 -10.579  74.401 -10.325  74.364 -10.070  74.323 -9.814  74.279 -9.556  74.231 -9.296  74.180 -9.035  74.124 -8.772  74.063 -8.507  73.999 -8.240  73.929 -7.971  73.855 -7.699  73.776 -7.425  73.691 -7.148  73.600 -6.869  73.503 -6.586  73.399 -6.299  73.288 -6.008  73.168 -5.713  73.040 -5.413  72.903 -5.106  72.753 -4.793  72.592 -4.473  72.415 -4.142  72.220 -3.800  72.004 -3.443  71.759 -3.068  71.475 -2.665  71.133 -2.222  70.686 -1.705  69.957 -0.959 /

\setplotsymbol ({\Large$\cdot$})
\color{black}
\plot 70.00 0.00  70.62 0.66  71.21 1.37  71.78 2.15  72.31 2.98  72.81 3.86  73.26 4.80  73.67 5.78  74.03 6.82  74.34 7.90  74.60 9.02  74.79 10.18  74.92 11.38  74.99 12.61  74.99 13.86  74.92 15.14  74.78 16.44  74.56 17.76  74.27 19.09  73.90 20.42  73.45 21.76  72.92 23.09  72.31 24.41  71.62 25.72  70.85 27.01  70.00 28.28  69.07 29.53  68.07 30.74  66.98 31.91  65.83 33.04  64.60 34.12  63.30 35.15  61.93 36.12  60.49 37.03  59.00 37.88  57.44 38.66  55.83 39.36  54.17 39.99  52.46 40.54  50.71 41.00  48.91 41.38  47.09 41.67  45.24 41.87  43.36 41.98  41.46 41.99  39.56 41.90  37.64 41.72  35.72 41.44  33.80 41.06  31.89 40.58  30.00 40.00  28.13 39.32  26.28 38.55  24.46 37.67  22.68 36.71  20.94 35.65  19.24 34.49  17.60 33.25  16.02 31.92  14.50 30.51  13.04 29.02  11.66 27.45  10.35 25.81  9.12 24.10  7.98 22.33  6.92 20.50  5.95 18.61  5.08 16.67  4.30 14.69  3.62 12.66  3.04 10.60  2.57 8.52  2.20 6.41  1.93 4.28  1.77 2.14  1.72 -0.00  1.77 -2.14  1.93 -4.28  2.20 -6.41  2.57 -8.52  3.04 -10.60  3.62 -12.66  4.30 -14.69  5.08 -16.67  5.95 -18.61  6.92 -20.50  7.98 -22.33  9.12 -24.10  10.35 -25.81  11.66 -27.45  13.04 -29.02  14.50 -30.51  16.02 -31.92  17.60 -33.25  19.24 -34.49  20.94 -35.65  22.68 -36.71  24.46 -37.67  26.28 -38.55  28.13 -39.32  30.00 -40.00  31.89 -40.58  33.80 -41.06  35.72 -41.44  37.64 -41.72  39.56 -41.90  41.46 -41.99  43.36 -41.98  45.24 -41.87  47.09 -41.67  48.91 -41.38  50.71 -41.00  52.46 -40.54  54.17 -39.99  55.83 -39.36  57.44 -38.66  59.00 -37.88  60.49 -37.03  61.93 -36.12  63.30 -35.15  64.60 -34.12  65.83 -33.04  66.98 -31.91  68.07 -30.74  69.07 -29.53  70.00 -28.28  70.85 -27.01  71.62 -25.72  72.31 -24.41  72.92 -23.09  73.45 -21.76  73.90 -20.42  74.27 -19.09  74.56 -17.76  74.78 -16.44  74.92 -15.14  74.99 -13.86  74.99 -12.61  74.92 -11.38  74.79 -10.18  74.60 -9.02  74.34 -7.90  74.03 -6.82  73.67 -5.78  73.26 -4.80  72.81 -3.86  72.31 -2.98  71.78 -2.15  71.21 -1.37  70.62 -0.66 70 0 /

\color{black}
\normalcolor

\endpicture
\end{center}
\caption{Left panel: Approximate zeros  computed from equation \Ref{eqn31}.
The $a''_+$ are denoted by $\circ$, the $a''_-$ are denoted by
$\times$, and exact zeros by $\bullet$.  The data are for $N=16$ and $n=32$.
Notice that the $a''_-$ are located on the other branch of the lima\c{c}on
while the $a''_+$ are approximate zeros on the first branch.
Right panel:  The zeros $a''_+$ for 
$n\in\{16,32,64,128,256,512,1024\}$.  For larger values of $n$ the zeros approach
the lima\c{c}on which is plotted around the zeros.}
\label{figure5}
\end{figure}

In the left panel in Figure \ref{figure5} the $a''_\pm$ are plotted for $n=16$ and
$n=32$.  The bullets are exact locations of partition function zeros, and the estimates
$a''_\pm$ are shown by open circles (for $a''_+$) and crosses (for $a''_-)$.
This shows that the choice of the $+$-sign in equation \Ref{eqn31} gives approximations
to the partition function zeros, while the $a''_-$ are points located close to the
other branch $\lmI$ of the lima\c{c}on. Notice in particular that equation \Ref{eqn31} does not
produce approximation to all the zeros -- there are two zeros for each value of $n$
near the negative real axis which are not approximated.  In addition, there are two
extra approximations for each value of $n$ near the positive real axis.  The remaining
zeros are approximated well.

In the right panel of Figure \ref{figure5}, the approximate zeros $a''_+$ are plotted for
$n\in\{16,32,64,128,256,512,1024 \}$.  With increasing $n$ these approach the lima\c{c}on.
Putting $k=\lfloor \rho n \rfloor$ in equation \Ref{eqn31} and then taking $n\to\infty$ gives
\begin{equation}
a''(\rho)
 = \lim_{n\to\infty} a''_+ 
= 2e^{2\rho\pi\Imi} + 2e^{\rho\pi\Imi}
\sqrt{ e^{2\rho\pi\Imi} - 1}
\label{eqn33A} 
\end{equation}
which is an alternative parametrization of the lima\c{c}on given by equation \Ref{eqn4}.

\begin{figure}[t]
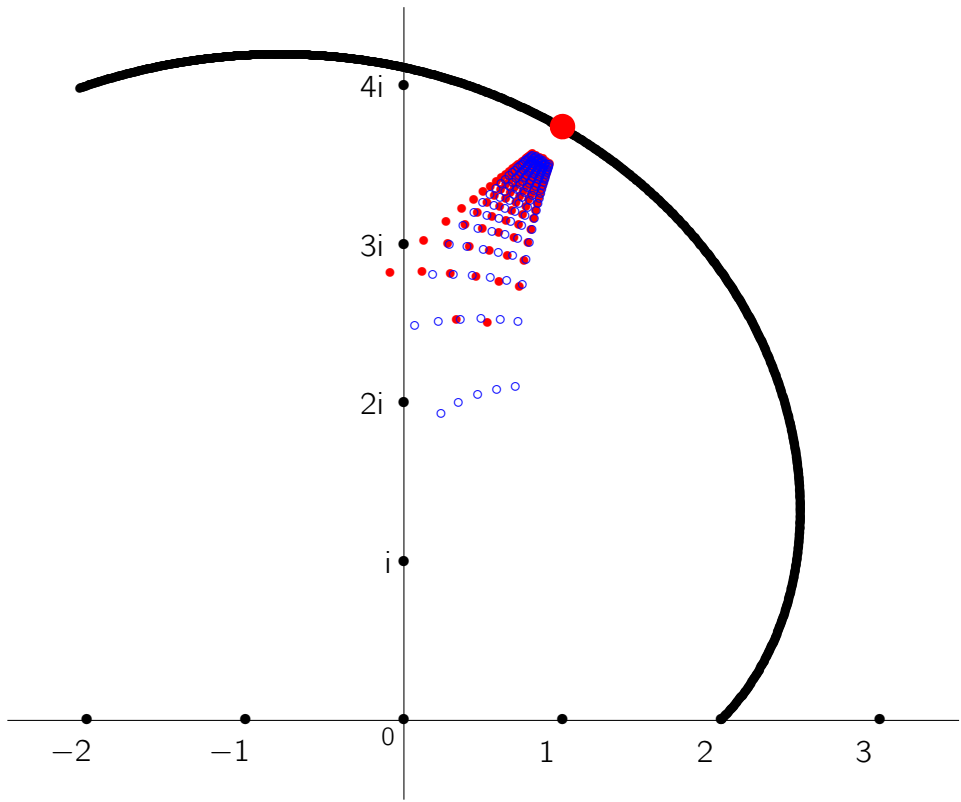

\begin{center}
\beginpicture
\color{black}
\setcoordinatesystem units <6pt,6pt>
\setplotarea x from -25 to 35, y from -5 to 45
\axes{-25}{-5}{0}{0}{35}{45}{-30}
\put {\footnotesize$0$} at -1 -1
\put {\footnotesize$\bullet$} at -20 0
\put {$-2$} at -21 -2
\put {\footnotesize$\bullet$} at -10 0
\put {$-1$} at -11 -2
\put {\footnotesize$\bullet$} at 10 0
\put {$1$} at 9 -2
\put {\footnotesize$\bullet$} at 20 0
\put {$2$} at 19 -2
\put {\footnotesize$\bullet$} at 30 0
\put {$3$} at 29 -2
\put {\footnotesize$\bullet$} at 0 10
\put {$\Imi$} at -1 10
\put {\footnotesize$\bullet$} at 0 20
\put {$2\Imi$} at -2 20
\put {\footnotesize$\bullet$} at 0 30
\put {$3\Imi$} at -2 30
\put {\footnotesize$\bullet$} at 0 40
\put {$4\Imi$} at -2 40

\color{Red}
\multiput {\scriptsize$\bullet$} at 3.322 25.251  5.258 25.082   -0.871 28.213  1.141 28.268  2.941 28.173  4.552 27.969  5.996 27.688  7.292 27.353  1.256 30.218  2.757 30.078  4.129 29.869  5.384 29.609  6.533 29.311  7.587 28.985  2.658 31.430  3.851 31.227  4.957 30.987  5.983 30.719  6.937 30.430  7.824 30.123  3.655 32.252  4.642 32.031  5.568 31.787  6.436 31.527  7.251 31.253  8.017 30.968  4.402 32.851  5.244 32.628  6.039 32.392  6.791 32.144  7.503 31.887  8.177 31.623  4.984 33.309  5.717 33.093  6.414 32.866  7.078 32.632  7.709 32.391  8.310 32.146  5.451 33.673  6.100 33.465  6.720 33.250  7.314 33.029  7.881 32.804  8.424 32.574  5.834 33.970  6.417 33.771  6.975 33.567  7.512 33.359  8.027 33.147  8.522 32.933  6.155 34.217  6.683 34.028  7.191 33.835  7.681 33.638  8.153 33.439  8.608 33.237  6.428 34.427  6.911 34.247  7.377 34.063  7.827 33.877  8.263 33.689  8.683 33.499  6.662 34.608  7.107 34.436  7.538 34.261  7.955 34.085  8.359 33.907  8.750 33.727  6.867 34.765  7.279 34.601  7.679 34.435  8.067 34.267  8.444 34.098  8.810 33.928  7.047 34.903  7.431 34.746  7.804 34.588  8.167 34.428  8.520 34.268  8.864 34.106  7.206 35.025  7.565 34.875  7.916 34.724  8.257 34.572  8.589 34.419  8.912 34.265  7.348 35.135  7.686 34.991  8.016 34.847  8.337 34.701  8.651 34.555  8.957 34.408  7.476 35.233  7.794 35.095  8.106 34.957  8.410 34.818  8.707 34.678  8.997 34.537  7.591 35.322  7.893 35.190  8.188 35.057  8.476 34.924  8.758 34.789  9.034 34.655  7.696 35.403  7.982 35.276  8.263 35.148  8.537 35.020  8.806 34.891  9.069 34.762  7.791 35.478  8.064 35.355  8.331 35.232  8.593 35.109  8.849 34.985  9.100 34.860  7.879 35.546  8.139 35.428  8.394 35.309  8.644 35.190  8.889 35.071  9.130 34.951  7.960 35.608  8.208 35.494  8.452 35.380  8.692 35.265  8.926 35.150  9.157 35.035  8.034 35.667  8.272 35.556  8.506 35.446  8.736 35.335  8.961 35.224  9.183 35.112  8.103 35.720  8.332 35.614  8.556 35.507  8.777 35.400     /

\color{Blue}
\multiput {\scriptsize$\circ$} at 2.342 19.349  3.447 20.027  4.649 20.531  5.863 20.860  7.039 21.037  0.685 24.857  2.173 25.135  3.572 25.277  4.875 25.310  6.084 25.254  7.201 25.129  1.810 28.097  3.113 28.092  4.324 28.015  5.449 27.882  6.494 27.704  7.463 27.490  2.880 29.970  3.977 29.849  5.004 29.689  5.964 29.496  6.862 29.278  7.703 29.040  3.742 31.181  4.677 31.013  5.557 30.821  6.387 30.609  7.169 30.382  7.907 30.141  4.427 32.028  5.237 31.842  6.005 31.641  6.732 31.426  7.423 31.201  8.078 30.968  4.979 32.656  5.691 32.465  6.370 32.264  7.017 32.053  7.634 31.836  8.223 31.612  5.431 33.140  6.066 32.951  6.673 32.754  7.255 32.551  7.812 32.343  8.346 32.130  5.807 33.525  6.379 33.341  6.928 33.151  7.457 32.957  7.964 32.758  8.453 32.556  6.124 33.840  6.645 33.662  7.146 33.480  7.629 33.294  8.096 33.105  8.545 32.914  6.396 34.103  6.873 33.931  7.334 33.757  7.779 33.579  8.210 33.400  8.627 33.218  6.630 34.325  7.071 34.161  7.497 33.993  7.910 33.824  8.311 33.653  8.699 33.480  6.835 34.516  7.244 34.358  7.641 34.198  8.026 34.036  8.400 33.873  8.763 33.708  7.016 34.682  7.397 34.531  7.768 34.377  8.129 34.223  8.479 34.067  8.821 33.910  7.177 34.828  7.534 34.683  7.882 34.536  8.221 34.388  8.551 34.238  8.873 34.088  7.320 34.957  7.656 34.817  7.984 34.676  8.304 34.534  8.616 34.391  8.920 34.248  7.449 35.072  7.766 34.938  8.076 34.803  8.379 34.666  8.674 34.529  8.963 34.391  7.566 35.176  7.866 35.047  8.160 34.916  8.447 34.785  8.728 34.654  9.003 34.521  7.672 35.270  7.957 35.145  8.236 35.019  8.510 34.893  8.777 34.767  9.039 34.639  7.769 35.355  8.040 35.234  8.306 35.113  8.567 34.992  8.822 34.870  9.073 34.747  7.857 35.432  8.116 35.316  8.371 35.199  8.620 35.082  8.864 34.964  9.104 34.846  7.939 35.504  8.187 35.391  8.430 35.278  8.669 35.165  8.903 35.052  9.133 34.938  8.015 35.569  8.252 35.460  8.485 35.351  8.714 35.242  8.939 35.132  9.160 35.022  8.084 35.630  8.312 35.524  8.536 35.419  8.756 35.313      /

\color{black}
\setplotsymbol ({\scriptsize$\bullet$})
\plot 20.000 0.000  20.201 0.205  20.400 0.417  20.597 0.634  20.791 0.858  20.983 1.088  21.172 1.323  21.358 1.565  21.541 1.813  21.720 2.066  21.897 2.325  22.070 2.590  22.239 2.861  22.404 3.137  22.566 3.419  22.724 3.707  22.877 4.000  23.026 4.299  23.171 4.602  23.311 4.911  23.447 5.226  23.577 5.545  23.703 5.869  23.824 6.199  23.939 6.533  24.050 6.872  24.154 7.215  24.253 7.564  24.347 7.916  24.435 8.273  24.517 8.635  24.592 9.000  24.662 9.369  24.725 9.743  24.783 10.120  24.833 10.501  24.877 10.886  24.915 11.274  24.946 11.665  24.970 12.060  24.987 12.457  24.997 12.858  25.000 13.261  24.996 13.668  24.984 14.076  24.966 14.487  24.939 14.901  24.906 15.316  24.865 15.734  24.816 16.153  24.759 16.574  24.695 16.997  24.623 17.421  24.543 17.847  24.456 18.273  24.360 18.701  24.256 19.129  24.144 19.558  24.025 19.987  23.897 20.417  23.761 20.847  23.616 21.276  23.464 21.706  23.303 22.136  23.134 22.565  22.957 22.993  22.771 23.420  22.578 23.847  22.375 24.272  22.165 24.697  21.946 25.119  21.719 25.540  21.484 25.959  21.240 26.377  20.988 26.792  20.728 27.205  20.459 27.615  20.183 28.023  19.898 28.428  19.605 28.830  19.303 29.229  18.994 29.625  18.677 30.017  18.351 30.406  18.018 30.790  17.676 31.171  17.327 31.548  16.970 31.921  16.605 32.289  16.233 32.652  15.852 33.011  15.465 33.365  15.069 33.714  14.667 34.058  14.257 34.396  13.840 34.729  13.415 35.056  12.984 35.377  12.545 35.692  12.100 36.001  11.647 36.304  11.188 36.601  10.723 36.891  10.251 37.174  9.772 37.451  9.287 37.720  8.796 37.983  8.299 38.238  7.796 38.486  7.288 38.727  6.773 38.959  6.253 39.185  5.728 39.402  5.197 39.611  4.661 39.812  4.120 40.005  3.574 40.190  3.024 40.366  2.468 40.534  1.909 40.693  1.345 40.844  0.777 40.986  0.205 41.118  -0.371 41.242  -0.951 41.357  -1.534 41.462  -2.121 41.558  -2.711 41.645  -3.303 41.723  -3.899 41.791  -4.498 41.849  -5.098 41.898  -5.702 41.937  -6.307 41.966  -6.915 41.985  -7.524 41.995  -8.135 41.995  -8.748 41.984  -9.362 41.964  -9.977 41.933  -10.593 41.893  -11.210 41.842  -11.827 41.781  -12.445 41.710  -13.063 41.629  -13.682 41.538  -14.300 41.436  -14.917 41.324  -15.534 41.202  -16.151 41.069  -16.767 40.926  -17.381 40.773  -17.994 40.610  -18.606 40.436  -19.217 40.252  -19.825 40.058  -20.432 39.853  /

\color{Red}
\put {\huge$\bullet$} at 10 37.3206

\color{black}
\normalcolor

\endpicture
\end{center}
\caption{Exact zeros (denoted by $\bullet$'s) converging to the point with argument equal to
$\frac{5}{12}\pi$ on the lima\c{c}on (determined by putting $\rho=\frac{1}{6}$ in equation \Ref{eqn33A}).
Approximate zeros (denoted by $\circ$'s) were computed from equation \Ref{eqn31}.  The exact zeros were computed
for $10\leq n \leq 150$ from $D_{2n}(a)$.}
\label{figure6}
\end{figure}

\section{Asymptotics for the leading zero}\label{sec:leadingzero}

Numerical exploration shows that the choice $k=1$ in equation \Ref{eqn31}
is a very good approximation to the leading zero $a^\ddagger_0$ 
(that is, of the zero of $D_{2n}(a)$ with smallest positive principal argument).  
The choice $k=2$ seems to approximate the next to leading zero as $n$ 
increases, but the choice $k=0$ is a spurious zero, which has no counterpart 
amongst the zeros of $D_{2n}(a)$. 

There are two short-comings about
equation \Ref{eqn31}.  The first is that the approximations leading to it introduced
a few new zeros, and the second is that, since all zeros approaches $2$ as $n\to\infty$
for fixed $k$, the approximations $a''_+$ must leave the set 
$S_\delta$ on which the approximation was done. This, for example, show that equation
\Ref{eqn31} is not an approximation to the $k$-th zero for fixed $k$, since
the $k$-th zero converges to $2$ and so leaves the set $S_\delta$.  On the
other hand, if $k=\lfloor \rho n\rfloor$, and $n\to\infty$, then equation \Ref{eqn31}
is an approximation for the zeros.  This is, for example, shown in figure \ref{figure6}
for the choice $\rho=\sfrac{1}{6}$, which corresponds to the point with argument
$\sfrac{5}{12}\pi$ on the lima\c{c}on.  Exact zeros and approximate zeros computed
from equation \Ref{eqn31} both converge to the limiting point.

In order to find asymptotic approximations of the location of the leading zero, 
we will approximate the partition function $D_{2n}(a)$ (see equation \Ref{eqn:exact_pf}).
The leading zero approaches $2$ at a rate proportional to $1/\sqrt{n}$, so, to first
order, $a_1 = 2 + O(1/\sqrt{n})$.   That is, we will consider 
$D_{2n}(2+\sfrac{c}{\sqrt{n}})$ and determine $c$.  Notice that
\begin{equation}
D_{2n}(2 + \sfrac{c}{\sqrt{n}})
= \sum_{k=0}^n\! \sum_{\ell=k}^n
\frac{2\ell+1}{n+\ell+1} \Bi{2n}{n+\ell} \Bi{\ell}{k} \frac{c^k}{n^{k/2}} .
\label{eqn33}   
\end{equation}
For fixed $n$ the summand is maximized when
\begin{eqnarray}
k &\simeq \frac{c^2 n}{(\sqrt{n}+c)(2\sqrt{n}+c)} 
= \sfrac{1}{2}c^2  - \sfrac{3}{4}\Sfrac{c^3}{\sqrt{n}}+ O\L \sfrac{1}{n} \R ; \\
\ell &\simeq \frac{ cn}{2\sqrt{n}+c}
= \sfrac{1}{2}c\sqrt{n} - \sfrac{1}{4}c^2 + O\L \Sfrac{1}{\sqrt{n}} \R .
\end{eqnarray}
Putting $n=m^2$ and $\ell = \lambda m$ and using Stirling's approximation to
approximate factorials in the binomial coefficients in the summand in equation
\Ref{eqn33} gives
\begin{equation}
\frac{2\ell+1}{n+\ell+1} \Bi{2n}{n+\ell} \Bi{\ell}{k} \frac{c^k}{n^{k/2}}
\simeq \frac{2^{2m^2+1}e^{-\lambda^2} c^k \lambda^{k+1}}{\sqrt{\pi}\, m^2 k!}
\, (1+o(1))
\label{eqn36}   
\end{equation}
to leading order.  The summation over $\ell$ can be approximated by integrating
this over $\lambda$ (using $\frac{d\ell}{d\lambda} = m$):
\begin{eqnarray}
\sum_{\ell=k}^n
\frac{2\ell+1}{n+\ell+1} \Bi{2n}{n+\ell} \Bi{\ell}{k} \frac{c^k}{n^{k/2}}
& \sim &
\int_0^\infty \frac{2^{2m^2+2}e^{-\lambda^2} c^k \lambda^{k+1}}{\sqrt{\pi}\, m^2 k!}m\,d\lambda \\
&=& \frac{2^{2m^2-k}\, c^k}{m\, \Gamma(\sfrac{1}{2} (k+1))}\, .
\end{eqnarray}
Summing over $k$ then gives the asymptotic formula
\begin{equation}
D_{2n}(2+\sfrac{c}{\sqrt{n}}) \sim
\Sfrac{4^n}{2\sqrt{\pi n}} \L
2+c\sqrt{\pi}\, e^{c^2/4} (1+\hbox{erf}(\sfrac{c}{2})) \R .
\end{equation}
The \textit{error function} is defined by
\begin{equation}
\hbox{erf} (x) = \Sfrac{2}{\sqrt{\pi}} \int_0^x e^{-t^2}\, dt
\end{equation}
for real $x$, and can be analytically continued to the entire complex plane. 
If $c=0$ then $D_{2n}(2) \sim \Sfrac{1}{\sqrt{\pi n}}\, 4^n$, which is the correct
asymptotics at $a=2$.

Zeros of $D_{2n}(a)$ are found at solutions of
\begin{equation}
F(c) = 2+c\sqrt{\pi}\, e^{c^2/4} (1+\hbox{erf}(\sfrac{c}{2})) = 0.
\label{eqn40}   
\end{equation}
The leading zero will correspond to that solution $c$ with smallest principal argument.
Solving numerically gives
\begin{equation}
c = 2.450314191845586\ldots+5.094256056412729\ldots\times \Imi 
\label{eqn41}   
\end{equation}
and this shows that the leading root approaches the critical point $a=2$ along
\begin{equation}
a_1 = 2 + \Sfrac{2.450314191845586\ldots}{\sqrt{n}}+
\Sfrac{5.094256056412729\ldots\times\Imi}{\sqrt{n}} + O\L \sfrac{1}{n} \R .
\label{eqn42}   
\end{equation}
Asymptotics for the next to leading zero can be determined by finding the
appropiate solution of equation \Ref{eqn40}. The result is
\begin{equation}
a_2 = 2 + \Sfrac{4.051192261300444\ldots}{\sqrt{n}}+
\Sfrac{6.323878106240248\ldots\times \Imi}{\sqrt{n}} + O\L \sfrac{1}{n} \R .
\label{eqn43}   
\end{equation}
The coefficients were easily computed to high precision and these results suggest that
these are not rational numbers.
The next term in equations \Ref{eqn42} and \Ref{eqn43} can be determined by 
computing the next term in equation \Ref{eqn36}.  This is
\begin{equation}
4^{m^2}\L \frac{2\, e^{-\lambda^2} c^k \lambda^{k+1}}{\sqrt{\pi}\, m^2 k!}
+ \frac{e^{-\lambda^2} (1+k+k^2-2\lambda^2)c^k\lambda^k}{
\sqrt{\pi}\, m^3 k!} + O\L \Sfrac{1}{m^4} \R \R .
\label{eqn44}   
\end{equation}
Computing an improved asymptotic formula for the partition from this gives
\begin{eqnarray}
\hspace{-2cm}
D_{2n}(2+\sfrac{c}{\sqrt{n}}) & \sim
\Sfrac{4^n}{2\sqrt{\pi n}} \L
2+c\sqrt{\pi}\, e^{c^2/4} (1+\hbox{erf}(\sfrac{c}{2})) \right. \nonumber \\
& \hspace{1cm} \left. - \Sfrac{c}{4\sqrt{n}} 
\L 2(2+c^2) + \sqrt{\pi} \, c (4+c^2)(1+\hbox{erf} \L \sfrac{c}{2} \R ) e^{c^2/4}  \R \R .
\label{eqn45}   
\end{eqnarray}
Put $c = c_1 + c_2 \sfrac{1}{\sqrt{n}}$ in the above, and set $c_1$ to be the value in equation \Ref{eqn41}, so as to eliminate the $\sfrac{1}{\sqrt{n}}$ term in \Ref{eqn45}.
The $\sfrac{1}{n}$ term in equation \Ref{eqn45} is then eliminated by putting
\begin{equation}
c_2 = -9.97370256476894\ldots + 12.482527911923\ldots \times \Imi .
\end{equation}
This improves the asymptotic for the leading zero $a_1$ in equation \Ref{eqn42} to
\begin{equation}
a_1 = 2 + \Sfrac{c_1}{\sqrt{n}} + \Sfrac{c_2}{n} + O \L \Sfrac{1}{\sqrt{n^3}} \R .
\label{eqn48}   
\end{equation}
In figure \ref{roots_approaching_2} the exact leading zero for $n\leq 150$ is shown by bullets
while the open circles are computed from equation \Ref{eqn48}.  The  approximation
is poor for small values of $n$, but improves when $n$ increases and the zeros
approaches the point $a=2$.

\begin{figure}[t]
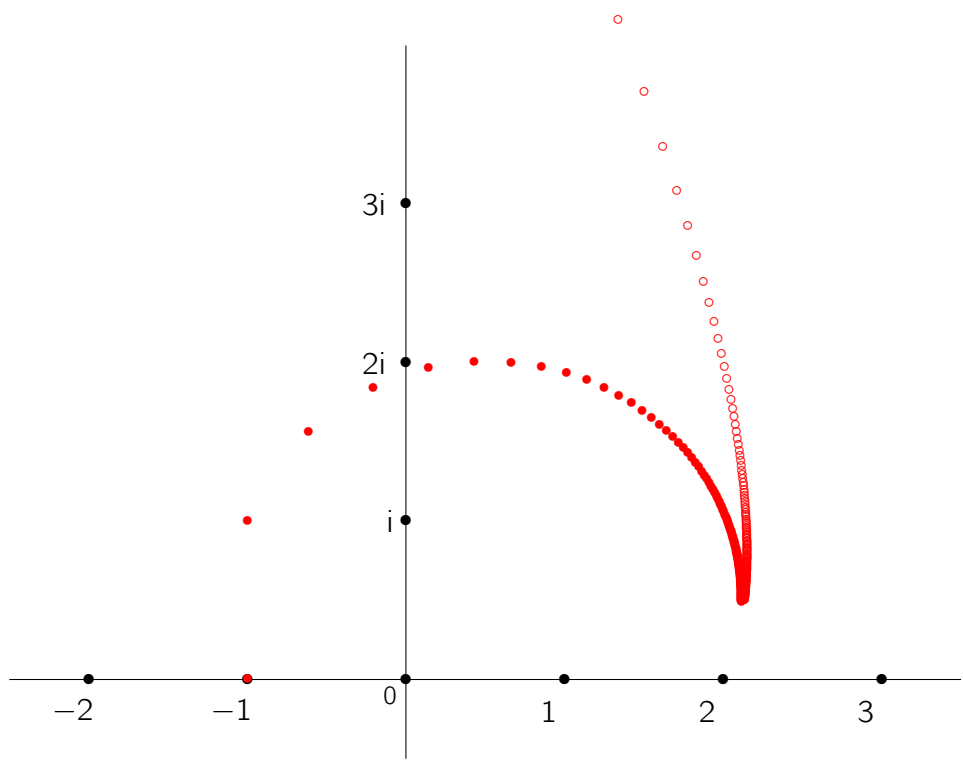

\begin{center}
\beginpicture
\color{black}
\setcoordinatesystem units <6pt,6pt>
\setplotarea x from -25 to 35, y from -5 to 40
\axes{-25}{-5}{0}{0}{35}{40}{-30}
\put {\footnotesize$0$} at -1 -1
\put {\footnotesize$\bullet$} at -20 0
\put {$-2$} at -21 -2
\put {\footnotesize$\bullet$} at -10 0
\put {$-1$} at -11 -2
\put {\footnotesize$\bullet$} at 10 0
\put {$1$} at 9 -2
\put {\footnotesize$\bullet$} at 20 0
\put {$2$} at 19 -2
\put {\footnotesize$\bullet$} at 30 0
\put {$3$} at 29 -2
\put {\footnotesize$\bullet$} at 0 10
\put {$\Imi$} at -1 10
\put {\footnotesize$\bullet$} at 0 20
\put {$2\Imi$} at -2 20
\put {\footnotesize$\bullet$} at 0 30
\put {$3\Imi$} at -2 30

\color{Red}
\multiput {\scriptsize$\bullet$} at -10.000 0.000  -10.000 10.000  
-6.145 15.639  -2.077 18.412  1.429 19.636  4.304 20.038  6.640 19.998  8.546 19.722  10.112 19.323  11.412 18.864  12.502 18.381  13.424 17.895  14.211 17.417  14.887 16.954  15.473 16.511  15.985 16.087  16.433 15.684  16.829 15.301  17.181 14.938  17.494 14.593  17.774 14.266  18.026 13.956  18.253 13.661  18.458 13.380  18.645 13.114  18.815 12.860  18.970 12.618  19.112 12.386  19.242 12.166  19.362 11.955  19.473 11.753  19.575 11.559  19.669 11.374  19.757 11.196  19.838 11.025  19.913 10.860  19.984 10.702  20.049 10.550  20.111 10.403  20.168 10.262  20.222 10.125  20.272 9.993  20.319 9.866  20.364 9.743  20.405 9.623  20.445 9.508  20.481 9.396  20.516 9.288  20.549 9.183  20.580 9.081  20.609 8.982  20.637 8.886  20.663 8.792  20.688 8.701  20.712 8.613  20.734 8.527  20.755 8.443  20.775 8.362  20.794 8.283  20.812 8.205  20.829 8.130  20.846 8.056  20.861 7.984  20.876 7.914  20.890 7.846  20.903 7.779  20.916 7.713  20.928 7.649  20.940 7.587  20.951 7.526  20.961 7.466  20.971 7.407  20.981 7.350  20.990 7.294  20.998 7.239  21.007 7.185  21.015 7.133  21.022 7.081  21.029 7.030  21.036 6.981  21.043 6.932  21.049 6.884  21.055 6.837  21.060 6.791  21.066 6.746  21.071 6.702  21.076 6.658  21.081 6.615  21.085 6.573  21.089 6.532  21.093 6.491  21.097 6.452  21.101 6.412  21.104 6.374  21.108 6.336  21.111 6.298  21.114 6.262  21.117 6.225  21.120 6.190  21.122 6.155  21.125 6.120  21.127 6.086  21.129 6.053  21.131 6.020  21.133 5.988  21.135 5.956  21.137 5.924  21.139 5.893  21.140 5.863  21.142 5.833  21.143 5.803  21.144 5.774  21.146 5.745  21.147 5.716  21.148 5.688  21.149 5.660  21.150 5.633  21.151 5.606  21.151 5.579  21.152 5.553  21.153 5.527  21.153 5.502  21.154 5.476  21.154 5.451  21.155 5.427  21.155 5.402  21.156 5.378  21.156 5.355  21.156 5.331  21.156 5.308  21.156 5.285  21.157 5.263  21.157 5.240  21.157 5.218  21.157 5.197  21.157 5.175  21.157 5.154  21.156 5.133  21.156 5.112  21.156 5.091  21.156 5.071  21.156 5.051  21.155 5.031  21.155 5.011  21.155 4.992  21.154 4.972  21.154 4.953  21.154 4.935  21.153 4.916   /

\color{Red}
\multiput {\scriptsize$\circ$} at 
13.380 
41.601  15.013 37.087  16.196 33.614  17.086 30.850  17.775 28.592  18.321 26.707  18.762 25.108  19.124 23.731  19.425 22.531  19.678 21.475  19.892 20.537  20.076 19.698  20.234 18.942  20.372 18.257  20.492 17.632  20.598 17.061  20.691 16.535  20.773 16.049  20.846 15.600  20.911 15.182  20.969 14.792  21.022 14.427  21.069 14.085  21.111 13.764  21.149 13.462  21.184 13.176  21.215 12.906  21.243 12.651  21.269 12.408  21.292 12.177  21.313 11.958  21.333 11.749  21.350 11.549  21.366 11.358  21.381 11.175  21.394 11.000  21.406 10.833  21.417 10.672  21.427 10.517  21.436 10.368  21.445 10.225  21.452 10.087  21.459 9.953  21.465 9.825  21.471 9.701  21.475 9.581  21.480 9.465  21.484 9.353  21.487 9.244  21.491 9.139  21.493 9.037  21.496 8.937  21.498 8.841  21.500 8.748  21.501 8.657  21.502 8.569  21.503 8.483  21.504 8.400  21.504 8.318  21.505 8.239  21.505 8.162  21.505 8.087  21.505 8.013  21.504 7.942  21.504 7.872  21.503 7.804  21.502 7.737  21.502 7.672  21.501 7.609  21.500 7.547  21.498 7.486  21.497 7.427  21.496 7.368  21.494 7.312  21.493 7.256  21.491 7.201  21.490 7.148  21.488 7.096  21.486 7.044  21.484 6.994  21.483 6.945  21.481 6.896  21.479 6.849  21.477 6.802  21.475 6.757  21.473 6.712  21.471 6.668  21.468 6.625  21.466 6.582  21.464 6.541  21.462 6.500  21.460 6.459  21.457 6.420  21.455 6.381  21.453 6.343  21.451 6.305  21.448 6.268  21.446 6.231  21.444 6.196  21.441 6.160  21.439 6.126  21.437 6.091  21.434 6.058  21.432 6.025  21.430 5.992  21.427 5.960  21.425 5.928  21.422 5.897  21.420 5.866  21.418 5.836  21.415 5.806  21.413 5.777  21.410 5.747  21.408 5.719  21.406 5.691  21.403 5.663  21.401 5.635  21.399 5.608  21.396 5.581  21.394 5.555  21.391 5.529  21.389 5.503  21.387 5.478  21.384 5.453  21.382 5.428  21.379 5.404  21.377 5.380  21.375 5.356  21.372 5.332  21.370 5.309  21.368 5.286  21.365 5.263  21.363 5.241  21.361 5.219  21.358 5.197  21.356 5.175  21.354 5.154  21.352 5.133  21.349 5.112  21.347 5.091  21.345 5.071  21.342 5.051  21.340 5.031  21.338 5.011  21.336 4.992   /


\color{black}
\normalcolor

\endpicture
\end{center}
\caption{Leading zeros (denoted by $\bullet$'s) and the asymptotic approximations
 to leading zeros (denoted by $\circ$'s).  The approximation to leading zeros
is given by $a_1$ in equation \Ref{eqn48}.  The approximation is poor for small
values of $n$, but inproves quickly with increasing values of $n$.
The exact zeros are given for $1\leq n \leq 150$ and the approximate zeros for
$6\leq n \leq 150$.}
\label{roots_approaching_2}
\end{figure}

\section{Conclusions}

In this paper we have examined the zeros of the partition function $D_{2n}(a)$ of 
adsorbing Dyck paths of length $2n$. These zeros are distributed over a region of 
the complex plane, but as $n$ becomes large they all (except for a single root at 
$a=0$) collect on a certain closed curve. We have shown that this curve is one 
lobe of a \lmcn{}, and that as $n\to\infty$ the roots become dense on this curve
(see theorem \ref{theorem4}).

In addition, we determined a formula approximating the locations of zeros for
finite values of $n$ (equation \Ref{eqn31}), and developed an asymptotic
formula for the location of the leading zeros as $n\to\infty$ (see equation
\Ref{eqn48}). In the limit as $n\to\infty$ the leading zeros converge to the point 
$a_c = 2$, forming an edge-singularity on the positive real axis.  The
rate of convergence to the edge-singularity is given in equation
\Ref{eqn48} as $O(1/\sqrt{n})$, and this is consistent with the crossover
exponent of adsorbing Dyck paths having value $\phi=\sfrac{1}{2}$ (see,
for example, references \cite{JvR10A,JvR15}). Recent work
suggests that the crossover exponent for adsorbing walks in three
dimensions may be different from $\sfrac{1}{2}$ \cite{BOP17}.

Adsorbing Dyck paths are just one example of a solvable model of 
interacting polymers. Another closely related model is that of 
\emph{pulled ballot paths}, which can be used to represent a linear 
polymer chain being pulled from a surface by an external force. 
In that case the curve on which the zeros accumulate turns out to 
be the outer boundary of two circles of radius $\sqrt{2}$, centred 
at $\pm \Imi$. There are many other solvable models in statistical 
mechanics, and the zeros of other models may display a variety of behaviours.

\vspace{1cm}

\noindent{\bf Acknowledgements:} EJJvR acknowledges financial support 
from NSERC (Canada) in the form of a Discovery Grant. NRB is supported 
by the Australian Research Council grant DE170100186.

\vspace{1cm}
\noindent{\bf References}
\bibliographystyle{plain}
\bibliography{PZerosDyck.bib}

\end{document}